\documentclass[ba]{imsart}
\pubyear{2024}
\volume{TBA}
\issue{TBA}
\firstpage{1}
\lastpage{1}

\usepackage{amsthm}
\usepackage{amsmath,amssymb}
\usepackage{natbib}
\usepackage[colorlinks,citecolor=blue,urlcolor=blue,filecolor=blue,backref=page]{hyperref}
\usepackage{graphicx}

\startlocaldefs



\usepackage{bm}
\usepackage{color}
\usepackage{rotating}

\usepackage{enumitem}

\usepackage{url} 
\usepackage{upgreek}

\usepackage{dsfont}
\usepackage{subcaption}
\usepackage{caption}

\usepackage{float}
 \usepackage{cancel}

\usepackage{wrapfig}

\renewcommand{\Pi}{\Uppi}



\newtheorem{theorem}{Theorem}[section]
\newtheorem*{theorem-sketch}{Sketch of the concentration result}
\newtheorem{proposition}[theorem]{Proposition}

\newtheorem{remark}{Remark}


\newcommand{\R}{\mathbb{R}}
\newcommand{\N}{\mathbb{N}}
\newcommand{\Z}{\mathbb{Z}}

\def\P{\mathds{P}}
\def\E{\mathds{E}}
\def\var{\mathrm{Var}}
\def\J{\mathcal{J}}

\newcommand{\KL}{\operatorname{KL}} 

\usepackage{xspace}
\usepackage[colorinlistoftodos, color=blue!30!white, 
					,disable
]{todonotes}                                        

\setlength{\marginparwidth}{21ex}
\newcommand{\omar}[2][]{\todo[size=\scriptsize,color=red!15!yellow,#1]{\textbf{O:} #2}\xspace}
\newcommand{\sami}[2][]{\todo[size=\scriptsize,color=orange!35!white,#1]{\textbf{S:} #2}\xspace}
\newcommand{\paul}[2][]{\todo[size=\scriptsize,color=blue!25!white,#1]{\textbf{P:} #2}\xspace}
\newcommand{\diego}[2][]{\todo[size=\scriptsize,color=red!30!white,#1]{\textbf{D:} #2}\xspace}




\endlocaldefs

\begin{document}

\def\spacingset#1{\renewcommand{\baselinestretch}%
{#1}\small\normalsize} \spacingset{1}


\begin{frontmatter}


\title{  Meta-analysis of Bayesian analyses}
\runtitle{Meta-analysis of Bayesian analyses}

\begin{aug}

\author{\fnms{Paul} \snm{Blomstedt}\thanksref{addr0,addr1,t0}\ead[label=e1]{first@somewhere.com}},
\author{\fnms{Diego} \snm{Mesquita}\thanksref{addr0,addr2,t0,t1}\ead[label=e2]{second@somewhere.com}},
\author{\fnms{Omar} \snm{Rivasplata}\thanksref{addr3,t0}}, \\
\author{\fnms{Jarno} \snm{Lintusaari}\thanksref{addr0}}, 
\author{\fnms{Tuomas} \snm{Sivula}\thanksref{addr0}},
\author{\fnms{Jukka} \snm{Corander}\thanksref{addr4,addr5}},
\and
\author{\fnms{Samuel} \snm{Kaski}\thanksref{addr0,addr3}}

\runauthor{Blomstedt et al.}

\address[]{arXiv:\href{https://arxiv.org/abs/1904.04484}{1904.04484}
}

\address[addr0]{Department of Computer Science, Aalto University
}

\address[addr1]{WithSecure
}

\address[addr2]{School of Applied Mathematics, Getulio Vargas Foundation
}

\address[addr3]{Department of Computer Science, University of Manchester
}

\address[addr4]{Department of Biostatistics, University of Oslo
}

\address[addr5]{Department of Mathematics and Statistics, University of Helsinki
}


\thankstext{t0}{Equal contribution}
\thankstext{t1}{Corresponding author: Diego Mesquita \texttt{diego.mesquita@fgv.br}}

\end{aug}

\begin{abstract}
Meta-analysis aims to generalize results from multiple related statistical analyses through a combined analysis. While the natural outcome of a Bayesian study is a posterior distribution, traditional Bayesian meta-analyses proceed by combining summary statistics (i.e., point-valued estimates) computed from data. 
In this paper, we develop a framework for combining posterior distributions from multiple related Bayesian studies into a meta-analysis. Importantly, the method is capable of reusing pre-computed posteriors from computationally costly analyses, without needing the implementation details from each study.
Besides providing a consensus across studies, the method enables updating the local posteriors \emph{post-hoc} and therefore refining them by sharing statistical strength between the studies, without rerunning the original analyses. 
We illustrate the wide applicability of the framework by combining results from likelihood-free Bayesian analyses, which would be difficult to carry out using standard methodology.
\paul{test}
\diego{test}
\omar{test}
\sami{test}
\end{abstract}


\begin{keyword}
Meta-analysis,
Bayesian analyses, 
likelihood-free inference,
approximate Bayesian computation,
machine learning
\end{keyword}

\end{frontmatter}



\section{Introduction}\label{sec:intro}

Meta-analysis comprises a collection of  methods designed to combine results from related statistical analyses carried out in separate studies. 
Traditionally, the per~study results, taken as inputs for meta-analysis, are summary statistics computed from data, such as point estimates for a treatment's effect size.
To combine point estimates, there exist well-established Bayesian and classical
methodologies \citep[see, e.g.,][and literature referenced therein]{Higgins+others:2009}.
However, while the natural outcome of a Bayesian analysis is a posterior distribution, 
the task of combining posteriors obtained from multiple related studies into a meta-analysis has not been explored in the literature.

In standard Bayesian \emph{random effects meta-analysis}, a summary statistic $D_j$, designed to convey information about a local effect of interest $\theta_j$, is observed for each of a number of studies $j=1,\dots, J$. Furthermore, the studies are assumed to be related via exchangeability of the local effects \citep[see][]{Gelman+others:2013}, making them conditionally independent given a shared overall effect $\varphi$. Note that the presence of $\varphi$ is not a property of the original studies but rather a \emph{post-hoc} assumption made by the meta-analyst.
Given the above assumptions, meta-analysis takes the shape of a hierarchical model:
\begin{alignat*}{2}\label{eq:rema_hm}
 \varphi &\sim Q \nonumber \\
 \theta_j &\sim P_{\varphi} \\
 D_j &\sim F_{\theta_j}. 
\end{alignat*}
In the case that $D_j$ is a direct estimate of $\theta_j$, the distribution $F_{\theta_j}$ is usually modeled as $\mathcal{N}(\theta_j,\hat{\sigma}_j^2)$, with $\hat{\sigma}_j^2$ estimated from data.
Let us denote the density functions of $F_{\theta_j}$, $P_{\varphi}$ and $Q$ as $f_j(\cdot|\theta_j)$, $p(\cdot|\varphi)$ and $q(\cdot)$, respectively.
One of the primary goals of this model is to estimate the overall effect $\varphi$, which has marginal posterior density 
\begin{equation}\label{eq:overall_effect}
q(\varphi|D_1,\ldots,D_J) \propto  \prod_{j=1}^J \left[\int f_j(D_j|\theta_j)p(\theta_j|\varphi) d\theta_j\right] q(\varphi).
\end{equation}

It is worth mentioning that in some cases a meta-analysis can be carried out using
data on the level of individual data points instead of aggregated statistics \citep{riley2010meta}. In this case 
$D_j$ can be taken to represent a set of data points.
However, regardless of the granularity of the data and the ensuing distributional assumptions,
the standard Bayesian meta-analysis still takes the form of a hierarchical model as described above, where 
the inputs for the model are point-valued observations.

In this work, we consider a setting in which the meta-analysis aims to combine \emph{posterior distributions} on the local effects $\theta_j$ instead of summary statistics. 
In other words, instead of being point-valued, the observable result from each study is a distribution. 
While summarizing studies via posterior distributions is most intuitive from a Bayesian perspective, traditional Bayesian meta-analysis using the standard machinery of hierarchical models is not equipped to handle distributional observations.

\subsection{Problem description and proposed solution}
\label{sec:intro_problem_solution}

This paper sets out to create a novel methodology and underlying theory to combine posterior distributions on local effects into a meta-analysis.
To the best of our knowledge, there is no previous work in the literature dealing with the problem just described, and therefore our  solution (outlined next) is unprecedented.

Given this setting, and continuing to assume exchangeability of the local effects,
we wish to address two questions:
(i) How should we utilize locally computed posteriors to update our belief about the overall effect in analogy with Equation~\eqref{eq:overall_effect}? And (ii) does the combined model allow us to refine the local posteriors \emph{post-hoc} through sharing of statistical strength?

Let $\Pi_j$ be a posterior distribution resulting from the $j$th study. We may identify $\Pi_j$ with its density $\pi_j$ when the latter exists, which arguably is the case in many practical situations.
As in conventional random-effects meta-analysis, we assume that the local effects are related via exchangeability,
and therefore the effects of the different studies are conditionally independent given an overall effect $\varphi$. 
Unlike in the conventional random-effects setting, which uses marginalized likelihood factors to update the prior, we propose using expected likelihood factors as functionals of the observed distributions, each factor thus being the integral of  $p(\theta_j|\varphi)$ against the density $\pi_j$. 
Accordingly, the proposed rule to update the prior $q(\varphi)$ on the overall effect $\varphi$, given  $\pi_1,\ldots,\pi_J$, is as follows: 
\begin{equation}\label{eq:modified_overall_effect}
q^*(\varphi|\pi_1,\ldots,\pi_J) \propto \prod_{j=1}^J \left[\int p(\theta_j|\varphi) \pi_j(\theta_j)d\theta_j\right] q(\varphi).
\end{equation}
While this expression may look surprising at first sight, we prefer to suspend judgement at this point and resume its discussion in Section~\ref{sec:theory} where its formal definition is presented and its properties are discussed; let us only highlight at this point that Equation~\eqref{eq:modified_overall_effect} is our answer to question (i) formulated in the beginning of this section.

We refer to the distribution defined by Equation~\eqref{eq:modified_overall_effect} as a \emph{superposterior distribution}  
to convey the notion of it being the result of an update on top of posterior distributions $\pi_1,\ldots,\pi_J$.
Notice that there is no explicit data level below the level of the local effects $\theta_j$ in the update defining $q^*(\cdot|\pi_1,\ldots,\pi_J)$. This is a substantial difference with the marginal posterior of the hierarchical model presented in Equation~\eqref{eq:overall_effect}.
We show in Section~\ref{sec:theory} below that the proposed update emerges from an extended \emph{non-hierarchical} model with measure-valued observations $\Pi_1,\ldots,\Pi_J$.
Interestingly, it can easily be shown that a standard non-hierarchical model (i.e., one with point-valued observations)  can be recovered as a special case by choosing $\Pi_j$ to be a Dirac delta measure; this is discussed further in Section~\ref{sec:theory}, where we also show that Equation~\eqref{eq:modified_overall_effect} retains some basic properties of standard Bayesian posterior updates, such as order-invariance in successive updates and concentration as $J\rightarrow \infty$.

We shall also see later on that the proposed update rule defines a joint distribution over $\varphi$ and $\theta_1,\ldots,\theta_J$, from which we can derive a joint density for $\theta_1,\ldots,\theta_J$ and, in turn, the latter can then be used to update the local posteriors $\pi_j$ by marginalizing over all quantities except the one to be updated (further on this in Section~\ref{sec:sharing_strength}); thus answering question (ii) formulated in the beginning of the section. 
This is an important additional feature of our method, which enables updating the local posteriors to share statistical strength between the studies, without the need of rerunning the original analyses.

\subsection{An illustrative example}
\label{sec:intro_illustrative_examples}

To gain an intuitive understanding of the proposed update rule (i.e., the rule outlined in Equation~\eqref{eq:modified_overall_effect}, see also Equation~\eqref{eq:posterior_uncertain_obs} below), consider the following model:
\begin{align*}
\varphi &\sim \mathrm{Beta}(\alpha,\beta)\\
\theta_j &\sim \mathrm{Bernoulli}(\varphi),\quad j=1\ldots,J.
\end{align*}
Suppose initially (but unrealistically) that the $\theta_j$ are observable. Then, a standard application 
of Bayes' rule to update the prior distribution $\mathrm{Beta}(\alpha,\beta)$, conditional on observed values $\theta_j$, yields a posterior distribution which is 
$\mathrm{Beta}(\alpha+r,\beta+J-r)$ with $r=\sum_{j=1}^J \theta_j$, by conjugacy. 
For reference, if $q(\varphi)$ denotes the density of the $\mathrm{Beta}(\alpha,\beta)$ prior, we note that
the posterior density for this model takes the form
\[
q(\varphi|\theta_1,\ldots,\theta_J) \propto \prod_{j=1}^J\varphi^{\theta_j} (1-\varphi)^{1-\theta_j}q(\varphi).
\] 
Next, suppose that the values of $\theta_j$ cannot be observed, but instead, the observables are distributions on these variables. 
Distributions on unobservable quantities are in practice often taken to be continuous, 
but for the purpose of this example we let the distributions on $\theta_j$ be discrete. Specifically, we assume they are binary distributions.

Thus, let the binary distributions $\pi_j$ be the observables, where $\pi_j(k)=\Pi_j(\theta_j=k)$, $k\in \{0,1\}$, and $\sum_{k=0}^1 \Pi_j(\theta_j=k)=1$.
We now wish to update the prior $\mathrm{Beta}(\alpha,\beta)$ conditional on the observed $\pi_j$.  
Following Equation~\eqref{eq:modified_overall_effect},
the update takes the form 
\[
q^*(\varphi|\pi_1,\ldots,\pi_J) \propto \prod_{j=1}^J \left[\hspace{1pt}\sum_{k=0}^1 \varphi^{k} (1-\varphi)^{(1-k)} \pi_j(k)\right] q(\varphi).
\]
While the modified model no longer permits analytical calculations through conjugacy in general, two analytically tractable special cases can be identified. 
Specifically, if $\pi_j(0)=\pi_j(1)=0.5$ for all $j$, then the observed distributions are uninformative about the value of $\theta_j$, and $q^*(\cdot|\pi_1,\ldots,\pi_J)$ coincides with the prior $\mathrm{Beta}(\alpha,\beta)$. 
On the other hand, if for each $j$ there is a $k_j$ with $\Pi_j(\theta_j=k_j)=1$,
then the $\theta_j$'s are in effect fully observed, and $q^*(\cdot|\pi_1,\ldots,\pi_J)$ from Equation~\eqref{eq:modified_overall_effect} equals the standard posterior $\mathrm{Beta}(\alpha+r,\beta+J-r)$. 
A numerical illustration is provided in Figure~1.

\begin{figure}[H]
\centering
\includegraphics[width=0.65\linewidth]{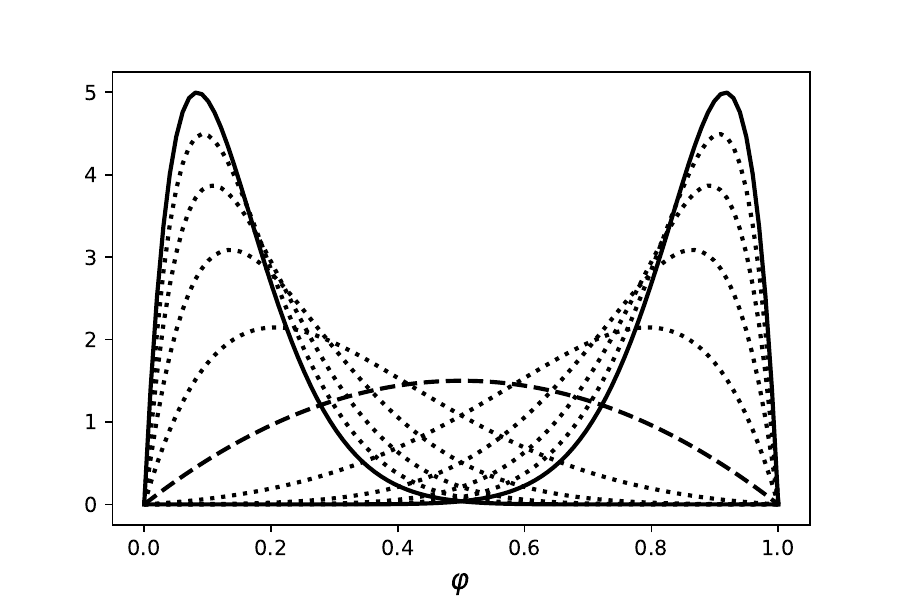}
\caption{
Updated density functions for $\varphi$ with updates as per Equation~\eqref{eq:modified_overall_effect}, each computed under the model $\varphi \sim \mathrm{Beta}(2,2)$, $\theta_j \sim \mathrm{Bernoulli}(\varphi)$, $j=1,\ldots,10$. The distributional observations are $\pi_j(k)=\Pi_j(\theta_j=k)$, $k\in \{0,1\}$, assumed to be identical for all $j$.  
The different densities are obtained by varying $\pi_j(1)$ from 0 to 1 by increments of 0.1.
The solid curves correspond to $\pi_j(1)=0$ and $\pi_j(1)=1$, equivalent to posteriors computed conditional on all $\theta_j$ fully observed with values 0 and 1, respectively. The dashed curve corresponds to $\pi_j(1)=0.5$, and equals the prior $\mathrm{Beta}(2,2)$.}
\label{fig:bernoulli_example_intro}
\end{figure}

\subsection{Further discussion of the setting}
\label{sec:intro_further_discussion}

Our framework builds on the same general idea of updating the distribution of a parameter, conditional on observations, as classical Bayesian inference. 
In a meta-analysis context, the distribution to be updated is the distribution of the overall effect $\varphi$.
However, in our setting the observations themselves are distributions, namely, distributions on the local effects $\theta_j$; whereas in previous meta-analysis settings the observations were summary statistics on the local effects.
Thus, our setting deals with distributional observations, and the observed distributions may be interpreted as expressing beliefs about the values of the local effects of interest.
Apart from being direct representations of the uncertainty inherent in parameter estimates, posterior distributions may include prior knowledge not present in the data, possibly obtained by expert elicitation \citep[e.g.][]{Albert+others:2012}. This is particularly important in problems where not enough data are available to inform a model about some of its parameters \citep[e.g.][]{Kuikka+others:2014}.
This novel setting is therefore naturally suited for a meta-analysis that aims to combine the posterior distributions produced by related Bayesian analyses.

An advantage of processing study-specific information as readily computed posterior distributions, 
instead of re-computing the local posteriors within a hierarchical model, is that the complexities 
regarding inference and computation in each study can be abstracted away from the meta-analysis.
This is an asset since the meta-analyst may neither have access to the study-specific data or the necessary implementation details, nor the computational resources to rerun each analysis as part of a larger hierarchical model that combines them. 
For example, in Section~\ref{subsec:tbc}, we illustrate our method in a case where the local data models lack a tractable likelihood function. While extending likelihood-free inference to hierarchical models is possible, it is computationally costly \citep{turner2014hierarchical},
and consequently, building a hierarchical model from each dataset would be very challenging.
It is important to note that our discussion here focuses specifically on scenarios where computationally complex posterior distributions are readily available, rather than serving as a comprehensive comparison between our framework and hierarchical models in general.

\subsection{Summary of contributions}

The main contributions of our work are the following:
\begin{itemize}[itemsep=2pt,topsep=1pt]
    \item Developing a framework for meta-analysis that combines posterior distributions. The framework proposes a rule to combine locally computed posteriors to update beliefs on an overall effect. The framework also shows how to refine the local posteriors \emph{post-hoc} through sharing of statistical strength.
    \item Demonstrating the framework via numerical examples with both synthetic and real-world data. To illustrate the generality of the approach, we consider in the numerical examples the problem of combining results from analyses conducted using likelihood-free Bayesian analyses, where there is no closed-form likelihood to relate data or summary statistics to the quantity or effect of interest, which poses a challenge for traditional formulations of meta-analysis.
    \item Proving formally the properties of the proposed rule. This entails the development of a theoretical framework to justify the proposed rule, and proving that it satisfies the properties of posterior update rules such as order-invariance in successive updates and concentration in the limit of infinite observations.
\end{itemize}

\subsection{Organization of the paper}

The paper is structured as follows:  
Section~\ref{sec:mba} summarizes the main equations for practical use and provides further insight into the framework. 
Importantly, we also discuss the use of our framework to refine study-specific posteriors \emph{post-hoc} through sharing of statistical strength across studies, and we draw connections to standard Bayesian meta-analysis.
Section~\ref{sec:numerical_demos} illustrates our method with posterior distributions resulting from likelihood-free Bayesian analyses, which are typically computationally complex and may require careful hyperparameter tuning.
Section~\ref{sec:theory} develops a novel framework for Bayesian inference with distributional observations, which is the basis for our novel meta-analysis approach. Section~\ref{sec:related_work} presents a discussion of related work.
The paper ends with concluding remarks in Section~\ref{sec:discussion}. 
The appendices (in the supplementary materials) provide proofs, an interpretation of our approach as message-passing in probabilistic graphical models, a computational strategy for our method, which can be implemented using general-purpose software, and a further experiment on a synthetic dataset.

\section{Meta-analysis of Bayesian analyses}\label{sec:mba}

We are motivated by the problem of conducting meta-analysis for Bayesian analyses summarized as posterior distributions, and refer to our framework as \emph{meta-analysis of Bayesian analyses} (MBA). 
The central updates of the framework are given in Equations (\ref{eq:posterior_uncertain_obs_density}) and (\ref{eq:update_pi_j}) below, which update beliefs on global and local effects, respectively.

To reiterate, a meta-analysis aims to combine inferences produced by various related studies into a consensus analysis. 
Thus, a local effect of interest $\theta_j$ is targeted by each of a number of studies $j=1,\dots,J$.
Judging the quantities $\theta_j$ to be exchangeable, the \emph{meta-analyst} may formulate a model 
\begin{equation}\label{eq:meta-analyst}
\prod_{j=1}^J p(\theta_j|\varphi)q(\varphi), 
\end{equation}
with an appropriate prior $q$ on the parameter $\varphi$. 
Note that this model is formulated \emph{as if} the $\theta_j$'s were fully observable quantities; this has an important role for conceptualization but is unrealistic in practical situations.

\subsection{Updating belief on the overall effect}
In our setting of interest, the observables consist of a set of posterior density functions $\{\pi_1,\ldots,\pi_J\}$, each informing us about beliefs on the value of a corresponding quantity of interest $\{\theta_1,\ldots,\theta_J\}$.
Then, to update $q$ based on the observed posterior densities, we apply the rule proposed in Equation~\eqref{eq:modified_overall_effect}, recalled here for convenience:
\begin{equation}\label{eq:posterior_uncertain_obs_density}
q^*(\varphi) \propto \prod_{j=1}^J \left[\int_{\Theta} p(\theta_j|\varphi)\pi_j(\theta_j) d\theta_j\right] q(\varphi),
\end{equation}
where we used the notation $q^*(\cdot):=q^*(\cdot|\pi_1,\ldots,\pi_J)$ for brevity. 
The formal definition of the posterior distribution is given in Section~\ref{sec:theory} (cf. Equation~\eqref{eq:posterior_uncertain_obs}), from which the density form given in Equation~\eqref{eq:posterior_uncertain_obs_density} is obtained.

\begin{remark}[Fixed-effects MBA] 
We can derive a fixed-effects version of MBA as a special case when the $\Phi = \Theta$ and the likelihood $p(\cdot | \varphi)$ approaches a point mass at $\varphi$. In this case, the MBA posterior for $\varphi$ becomes proportional to the product of study-specific distributions $\pi_1,\ldots,\pi_J$ and the prior $q$, i.e., 
\begin{equation*}
    q^{*}(\varphi) \rightarrow Z^{-1} \prod_{j=1}^{J} \pi_j(\varphi) q(\varphi).
\end{equation*}
Note that when the prior $q(\varphi)$ is uniform, $q^{*}(\varphi)$ becomes proportional to the product of study-specific distributions. If we also assume the $\pi_1, \ldots, \pi_J$ are Bayesian posteriors computed using flat priors, then $q^{*}(\varphi)$ is equivalent to multiplying the study-specific likelihoods and normalizing the result.
\end{remark}

\subsection{Post-hoc refinement of local posteriors}\label{sec:sharing_strength}

In a meta-analysis context, the parameter $\varphi$ often has an interpretation as the central tendency of some shared property of $\theta_1,\ldots,\theta_J$, such as the mean or the covariance (or both jointly).  
As such, inference on $\varphi$ is often of primary interest in providing a `consensus' over a number of studies.  
As a secondary goal, we may also be interested in updating the posterior of 
any individual quantity $\theta_j$, subject to the information provided by the posteriors on the remaining quantities. To do so, we may use the joint density of $\theta_1,\ldots,\theta_J$ as starting point:
\begin{align*}
    p^*(\theta_1,\ldots,\theta_J) \propto   \int_{\Phi}\prod_{j=1}^J \left[ p(\theta_j|\varphi)\pi_j(\theta_j) \right] q(\varphi) d\varphi.
\end{align*}
The formal definition of the joint distribution is given in Section~\ref{sec:theory} (cf. Equation~\eqref{eq:joint_belief}), from which the density form given above is obtained.
To update a local posterior, one can then marginalize over all quantities but the one to be updated. Let $\J := \{1,\ldots,J\}$ be a set of indices and let $j'\in \J$ be an arbitrary index in this set.
The density function $\pi_{j'}$ is then updated as follows:   
\begin{equation}\label{eq:update_pi_j}
\pi_{j'}^*(\theta_{j'}) \propto 
\int_{\Phi}p(\theta_{j'}|\varphi)\pi_{j'}(\theta_{j'})\prod_{j\in\J\setminus j'}^J \left[\int_{\Theta} p(\theta_j|\varphi)\pi_j(\theta_j) d\theta_j\right] q(\varphi) d\varphi.
\end{equation}

It turns out that Equations (\ref{eq:posterior_uncertain_obs_density}) and (\ref{eq:update_pi_j}) can also be formulated as an instance of message-passing in probabilistic graphical models \cite[e.g.][]{Yedidia+others:2001}. We discuss this view further in the supplementary material (Appendix C).

\begin{remark}[Non-degeneracy of local effects]
According to Section~\ref{sec:posterior_concentration}, the density $q^*(\varphi)$, defined in Equation~\eqref{eq:posterior_uncertain_obs_density}, will under suitable conditions become increasingly peaked around some point $\varphi_0$, as $J\rightarrow \infty$. 
That $\pi_{j'}^*(\theta_j)$ does not behave similarly, becomes clear by the following considerations.  
First, we note that Equation~\eqref{eq:update_pi_j} is equivalent to
\begin{equation*}
    \pi_{j^\prime}^{*}(\theta_{j^\prime}) = Z_{j^\prime}^{-1}\pi_{j^\prime}(\theta_{j^\prime}) \int_{\Phi} p(\theta_{j^\prime} | \varphi) q^*(\varphi | \pi_1, \ldots, \pi_{j^\prime - 1}, \pi_{j^\prime + 1}, \ldots, \pi_{J}) \,d \varphi,
\end{equation*}
where $Z_j$ is a normalizing constant. 
As $q^*(\varphi | \pi_1, \ldots, \pi_{j^\prime - 1}, \pi_{j^\prime + 1}, \ldots, \pi_{J})$ becomes increasingly peaked around $\varphi_0$, the integral in the above equation converges to $p(\theta_{j^\prime} | \varphi_0)$. Consequently,
\begin{equation*}
    \pi_{j^\prime}^{*}(\theta_{j^\prime}) \rightarrow Z_{j^\prime}^{-1} \pi_{j^\prime}(\theta_{j^\prime}) p(\theta_{j^\prime} | \varphi_0)    ,
\end{equation*}
which can only be degenerate if either $\pi_{j^\prime}(\theta_{j^\prime})$ or  $p(\theta_{j^\prime} | \varphi_0)$ is degenerate by design.
Instead of degeneracy, $\pi_{j^\prime}^{*}(\theta_{j^\prime})$ exhibits \emph{shrinkage} with respect to $\varphi_0$. 
\end{remark}

\subsection{Bayesian meta-analysis as a special case}
\label{sec:rema_fema}

It is straightforward to show that standard Bayesian random-effects and fixed-effects meta-analyses can be recovered as special cases of MBA.  
In its traditional formulation \cite[e.g.][]{Normand:1999}, \emph{random-effects} meta-analysis (REMA) assumes that for each of $J$ studies, a summary statistic, $D_j$, $j=1,\ldots,J$, has been observed, drawn from a distribution with study-specific mean $\E(D_j)=\theta_j$ and variance $\var(D_j)=\sigma_j^2$:
\begin{equation}\label{eq:REMA_data_model}
D_j \sim \mathcal{N}(\theta_j,\sigma_j^2),
\end{equation}
where the approximation of the distribution of $D_j$ by a normal distribution is justified by the asymptotic normality of maximum likelihood estimates. The variances $\sigma_j^2$ are directly estimated from the data, while the means $\theta_j$ are assumed to be drawn from some common distribution, typically 
\[
\theta_j \sim \mathcal{N}(\mu,\sigma_0^2), 
\]
where the parameters $\mu$ and $\sigma_0^2$ represent the average treatment effect and inter-study variation, respectively. 
For this model, the $(\mu, \sigma_0^2)$ are the global effect parameters, for which the general notation $\varphi$ is used elsewhere in the text. 
\emph{Fixed-effects} meta-analysis (FEMA) is a special case of REMA, where $\sigma_0^2 \rightarrow 0$, such that $\theta_1=\theta_2= \cdots=\theta_J$. 

The marginal posterior density for the parameters $(\mu,\sigma_0^2)$ in REMA can be written as 
\begin{alignat*}{2}
q(\mu,\sigma_0^2|D_1,\ldots,D_J) 
&\propto q(\mu,\sigma_0^2)\prod_{j=1}^J \int_{\Theta} N(D_j|\theta_j,\hat{\sigma}_j^2) N(\theta_j|\mu,\sigma_0^2)d\theta_j\\ 
 &\propto q(\mu,\sigma_0^2)\prod_{j=1}^J \int_{\Theta} l_j(\theta_j;D_j) N(\theta_j|\mu,\sigma_0^2) d\theta_j ,
\end{alignat*}
where $N(\cdot|\cdot,\cdot)$ denotes a Gaussian density function, $l_j(\theta_j;D_j)$ is the likelihood function of $\theta_j$ given $D_j$, and $\hat{\sigma}_j^2$ is the empirical variance of $D_j$. 

To study the connection between the above posterior density and  Equation~\eqref{eq:posterior_uncertain_obs_density}, assume that instead of a summary statistic $D_j$, each study has been summarized using a posterior distribution with density $\pi_j(\theta_j)$ over its study-specific effect parameter $\theta_j$. If each of the posteriors has been independently computed under a Gaussian data model (see Equation~\eqref{eq:REMA_data_model}), and using  improper uniform priors 
$\nu_j(\theta_j)\propto 1$ for each $j=1\ldots J$, then the local posterior densities are 
\[
\pi_j(\theta_j)=N(\theta_j|D_j,\hat{\sigma}_j^2)\propto \exp\left\lbrace-\frac{(D_j-\theta_j)^2}{2\hat{\sigma}_j^2}\right\rbrace = l_j(\theta_j;D_j) \quad \forall j=1,\ldots,J,
\] 
resulting in both the marginal posterior of $(\mu,\sigma_0^2)$ for REMA, and the corresponding superposterior for MBA, being equivalent.

\section{Numerical studies}\label{sec:numerical_demos}

In this section, we illustrate the proposed  
novel framework, {meta-analysis of Bayesian analyses} (MBA), with numerical examples\footnote{Code available at \href{https://github.com/weakly-informative/BA-MBA}{github.com/weakly-informative/BA-MBA}}. 
To illustrate the generality of the approach,
we consider the problem of combining results from analyses conducted using likelihood-free models. 
We emphasize, however, that MBA is not restricted to this class of models.
In likelihood-free inference, the data can typically be summarized by a number of different statistics but there is no closed-form likelihood to relate these to the quantity or effect of interest, which poses a challenge for traditional formulations of meta-analysis.  
Using our framework, we directly utilize the local posteriors to build a joint model, from which we sample using a simple computational strategy, described in the supplementary material (Appendix D). 
In addition to modeling the shared central tendency of the inferred model parameters, we demonstrate that weakly informative or poorly identifiable posteriors for original studies can be updated \emph{post-hoc} through joint modeling.

We begin with a brief review of likelihood-free inference using approximate Bayesian computation.

\subsection{Likelihood-free inference using approximate Bayesian computation}\label{sec:abc}

Approximate Bayesian computation (ABC) is a paradigm for Bayesian inference in models which either entirely lack an analytically tractable likelihood function, or for which it is costly to compute.
The only requirement is that we are able to sample data from the model by fixing values for the parameters of interest, as is the case for simulator-based models. 
In the basic form of ABC, simulations are run for parameter proposals drawn from a prior distribution.
The parameter proposals whose simulated data $x_\theta$ match the observed data $x_0$ are collected and constitute a sample from the posterior distribution. 
It can be shown that this process is equivalent to accepting parameter proposals in proportion to their likelihood value, given the observed data,
as is done in traditional rejection sampling.

In practice, the simulated data virtually never exactly match the observed data and very few samples 
from the posterior distribution would be acquired.
This problem can be circumvented by loosening the acceptance condition to accept samples whose simulations yield results similar enough to the observed data.
For this purpose, a dissimilarity function $d$ and a scalar $\varepsilon > 0$ are defined such that a parameter proposal with respective simulation result $x_\theta$ is accepted if $ d(x_\theta, x_0) \leq \varepsilon $. This function is often defined in terms of summary statistics $s(x_\theta)$ and $s(x_0)$. For example, $d$ could be defined as the Euclidean distance between $s(x_\theta)$ and $s(x_0)$.
The aforementioned relaxation results in samples being drawn from
an \emph{approximate} posterior instead of the actual posterior distribution, hence the name approximate Bayesian computation. 
 
For a comprehensive introduction to ABC, see \citet{Marin+others:2012}. More recent developments are reviewed in \cite{Lintusaari+others:2017}.
In the following numerical illustrations, the inputs for meta-analysis are posterior distributions for the separate local studies obtained using ABC.
These likelihood-free inferences are implemented using the ELFI open-source software package \citep{Lintusaari+others:2018}.

\subsection{Example 1: MA\texorpdfstring{$(q)$}{(\textit{q})} process}
\label{subsec:ma2}

The moving-average process of order $q$, abbreviated MA$(q)$, 
is a standard example in the literature on likelihood-free inference due to its simple structure but fairly complex likelihood and non-trivial relationship between parameters and observed data.
In our first example, we use simulated data from a MA$(q)$ process of order $q=2$. 
Assuming zero mean, the process $(y_t)_{t\in \N^{+}}$ is defined as  
\begin{equation}\label{eq:ma2}
y_t = \epsilon_t + \theta^{(1)} \epsilon_{t-1} + \theta^{(2)} \epsilon_{t-2},
\end{equation}
where $(\theta^{(1)},\theta^{(2)}) \in\R^2$ 
and $\epsilon_s \sim \mathcal{N}(0,1)$, ${s\in\Z}$.\footnote{To keep the notation consistent with Sections~\ref{sec:mba} and \ref{sec:theory}, we use subscripts of $\theta$ to index studies, while different dimensions of $\theta$ are indexed using superscipts.} 
The quantity of interest for which we conduct inference is $\theta = (\theta^{(1)},\theta^{(2)})$. 
Following \citet{Marin+others:2012}, we use as prior for $\theta$ a uniform distribution over the set
\[
\mathcal{T}\subset \R^2 \triangleq \{(\theta^{(1)},\theta^{(2)})\in\R^2|-(\theta^{(2)} + 1) < \theta^{(1)} < \theta^{(2)} + 1, \; -1<\theta^{(2)}<1\},
\]
which, by restriction of the parameter space, imposes a general identifiability condition for MA$(q)$  processes. 
Inference for $\theta$ is then conducted using ABC with rejection sampling, taking as summary statistics the empirical autocovariances of lags one and two, denoted as $\hat{\gamma}_1$ and $\hat{\gamma}_2$, respectively. 
Euclidean distance of 0.1 is used as the acceptance threshold. 
Note that in this example, and in contrast to the next example in Section~\ref{subsec:tbc}, 
posterior sampling from the current model is also possible using MCMC, although evaluation of the 
likelihood is difficult due to the unobserved $\epsilon_s$.

To illustrate our meta-analysis framework, we first sample $J=12$ realizations of $\theta$ using the following generating process: 
\begin{equation}\label{eq:theta_gen}
\theta^{(1)}\sim \mathrm{Unif}(0.4,0.8), \quad \theta^{(2)} \sim \mathcal{N}(-0.4+\theta^{(1)}, 0.04^2).
\end{equation}
Given each realization $\theta_j=(\theta^{(1)}_j,\theta^{(2)}_j)$, $j=1,\ldots,J$, we then generate a series of $10$ data points, $(y_{j,1},\ldots,y_{j,10})$, according to Equation~\eqref{eq:ma2}. 
For each time-series, we independently conduct likelihood-free inference as described above, generating $1000$ samples from the posterior.
The computed posterior distributions along with their corresponding true parameter values are shown in Figure~\ref{fig:ma2_independent_posteriors}.  
It can be seen that the very limited information given by the data in each of the analyses leaves the posteriors with a considerable uncertainty.

For meta-analysis, we first specify a model for the study-specific effects $\theta_1,\ldots,\theta_J$ \emph{as if} they were observed quantities from an exchangeable sequence; see Equation~\eqref{eq:meta-analyst}. As the true generating mechanism of the effects is typically unknown, the model must be specified according to the analyst's judgment. To reflect this, we will here model the generating process as a Gaussian distribution with parameters $\varphi = (\mu,\Sigma_0)$, 
\begin{equation}\label{eq:theta_conditional}
\theta_j\sim \mathcal{N}_2(\mu, \Sigma_0).
\end{equation}
For $\mu$ and the covariance matrix $\Sigma_0$, we use Gaussian and inverse Wishart priors, respectively,
\begin{equation}\label{eq:hyperpriors}
\centering
 \mu \sim \mathcal{N}_2(m,V) \quad \text{ and } \quad \Sigma_0\sim \mathcal{W}^{-1}(\nu,\Psi) \,,
\end{equation} 
with 
\[
\centering
m = \begin{bmatrix}
1/2\\
0
\end{bmatrix},\quad 
V = \begin{bmatrix}
0.4 & 0.05 \\
0.05 & 0.1
\end{bmatrix},\quad  \text{ and } \quad
\nu = 4,\quad  
\Psi = \begin{bmatrix}
0.4 & 0.1 \\
0.1 & 0.2
\end{bmatrix}.
\]
The above values were chosen to provide reasonable coverage of $\mathcal{T}$, the constrained support of $\theta$. Furthermore, $\nu$ was chosen as $\mathrm{dim}(\theta) + 2$ to directly yield $\Psi$ as the mean of the inverse Wishart prior on $\Sigma_0$. 

After specifying the assumed generative model for $\theta_1,\ldots,\theta_J$ according to Equations~(\ref{eq:theta_conditional})--(\ref{eq:hyperpriors}), the following step is to incorporate the observed posteriors for each $\theta_j$ into the inference. 
We do this using the computational scheme detailed in Appendix D.
Since the observed posteriors can be of arbitrary form, we carry out the computation in two stages. First, we draw samples from an approximate joint distribution, where each observed posterior is approximated with a suitable parametric distribution. Then, in the second stage, we refine the joint distribution using 
sampling/importance resampling \citep[SIR;][]{Smith+Gelfand:1992} to correct for discrepancies between the 
approximated and true forms of the posteriors. In the current example, we apply the MBA updates using 
bivariate normal approximations to each of the $J=12$ posteriors. 

We compare MBA against results obtained using traditional random-effects meta-analysis (REMA), as specified in Section~\ref{sec:rema_fema}. 
The likelihood of REMA is given by the model
\begin{equation}\label{eq:ma2_rema_likelihood}
\hat{\!\theta}_j \sim  \mathcal{N}_2(\theta_j, \hat{\Sigma}_j),
\end{equation}
where the effect estimates $\,\hat{\!\theta}_j=(\hat{\theta}^{(1)}_j,\hat{\theta}^{(2)}_j)$ are computed numerically using conditional sum of squares\footnote{Implemented in the \texttt{statsmodels} Python module.} and the study-specific covariance matrices $\hat{\Sigma}_j$ are estimated using bootstrap. The hierarchical distribution on $\theta_j$ follows Equations (\ref{eq:theta_conditional}) and (\ref{eq:hyperpriors}) above. Therefore, the essential difference between REMA and MBA is whether we combine this distribution with a likelihood function based on Equation~\eqref{eq:ma2_rema_likelihood} or with the observed posterior on $\theta_j$. 
The results of the comparison are presented in Figures \ref{fig:ma2_mu0_posteriors}--\ref{fig:ma2_re_posteriors}.\footnote{The experiment was repeated with multiple random seeds, yielding similar results.}

\begin{figure}[H]
\centering
\includegraphics[width=0.83\linewidth]{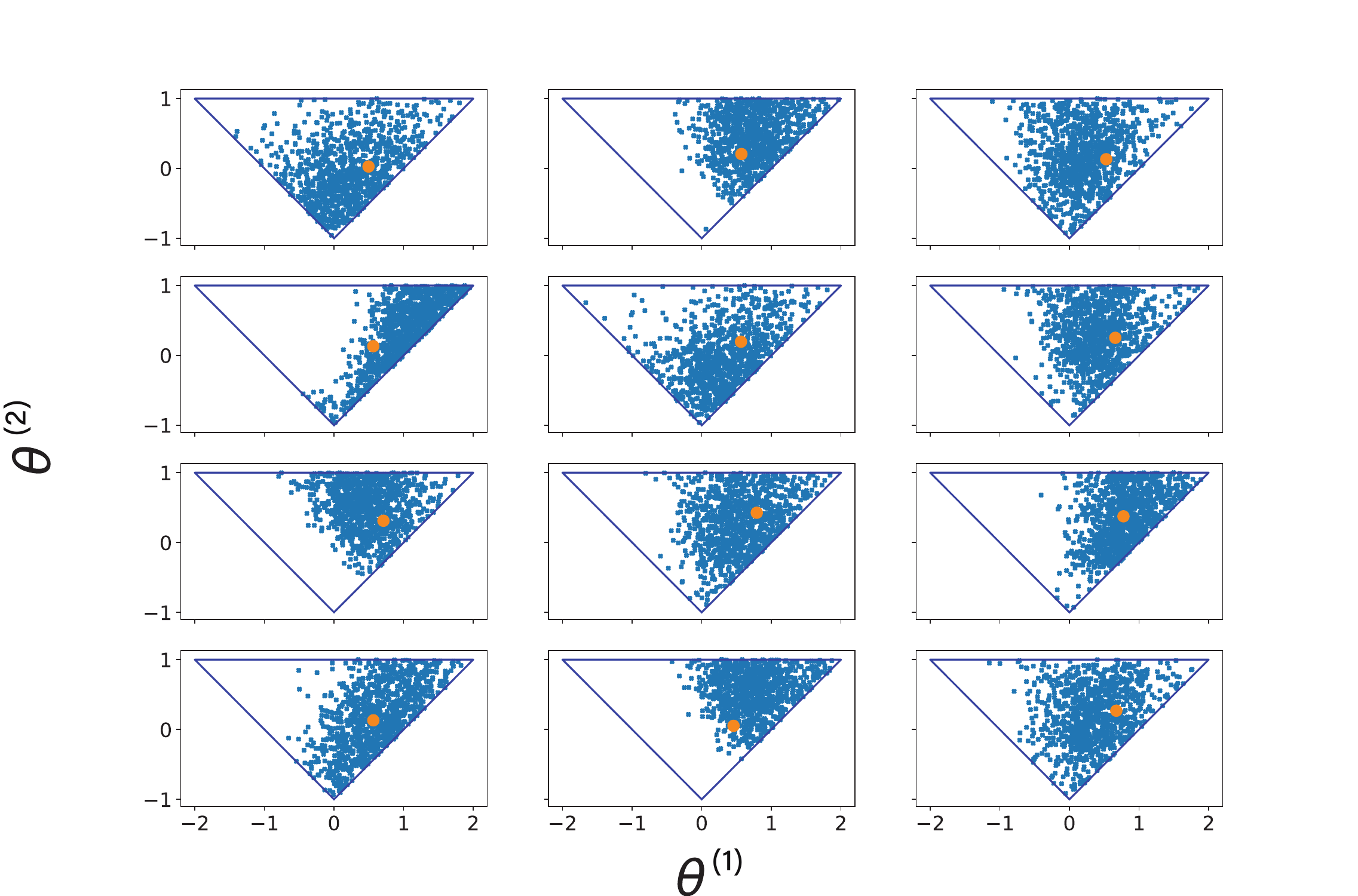}
\\
\caption{Posterior samples for the parameters $(\theta^{(1)}_j,\theta^{(2)}_j)$ of $J=12$ MA$(2)$ processes, obtained using rejection-sampling ABC. The yellow dots denote the true generating parameter values.}
\label{fig:ma2_independent_posteriors}
\end{figure}

\begin{figure}[H]
\centering
\includegraphics[width=0.83\linewidth]{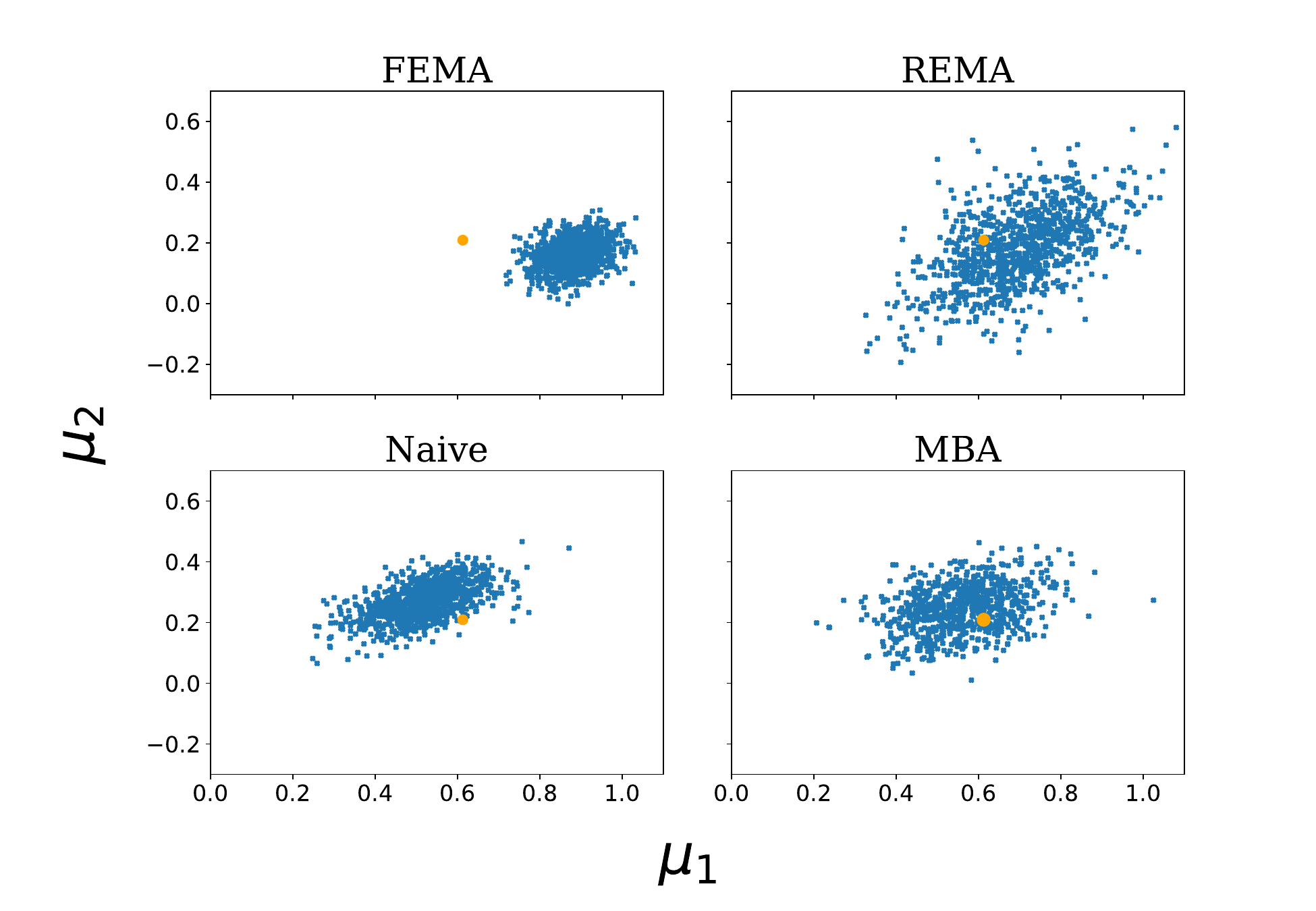}
\\
\caption{Posterior for the global  effect $\mu$ using FEMA, REMA, a `naive' model, and MBA. The yellow dot denotes the actual mean of the distribution used to generate the $\theta_j$ values for the MA(2) series.}
\label{fig:ma2_mu0_posteriors}
\end{figure}

\begin{figure}[H]
\centering
\includegraphics[width=0.83\linewidth]{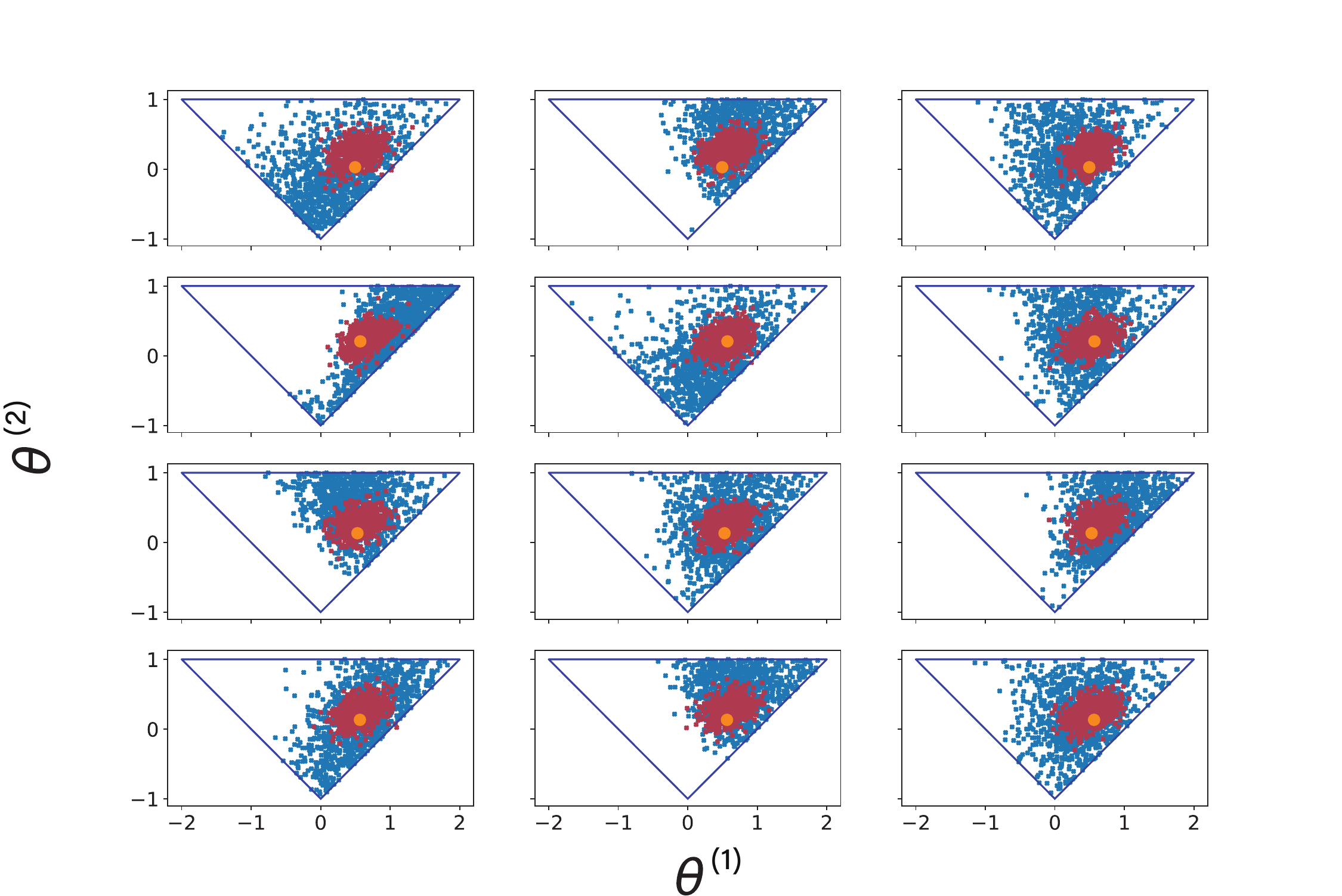}
\\
\caption{Updated posteriors for study-specific effects obtained using MBA (red) on top of the ones originally obtained using independent ABC's (blue). Yellow dots denote the true  parameter values.}
\label{fig:ma2_updated_posteriors}
\end{figure}

\begin{figure}[H]
\centering
\includegraphics[width=0.83\linewidth]{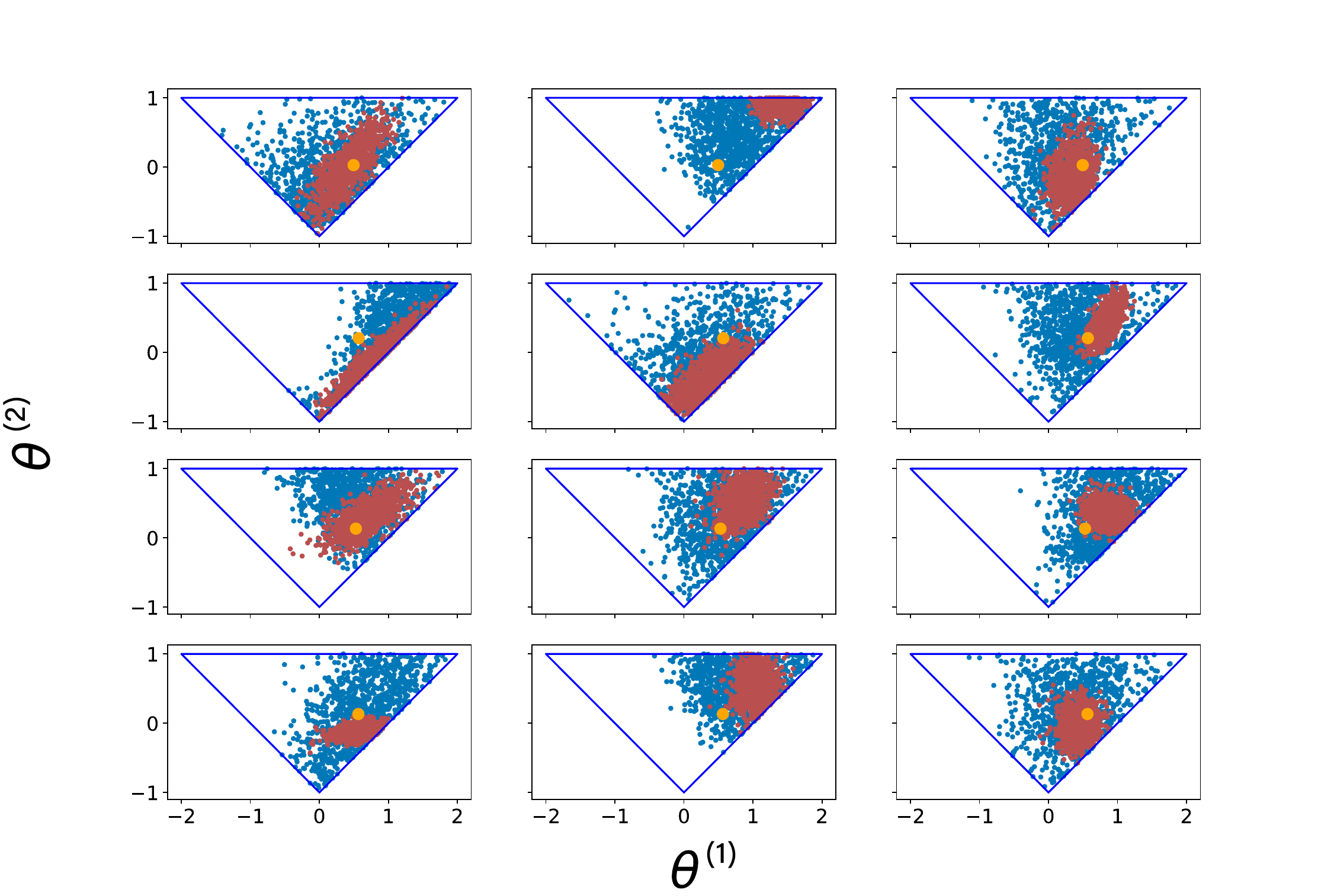}
\\
\caption{Posterior marginals for  study-specific effects obtained using REMA (red) on top of the ones originally obtained using independent ABC's (blue). Yellow dots denote the true  parameter values.}
\label{fig:ma2_re_posteriors}
\end{figure}

Figure~\ref{fig:ma2_mu0_posteriors} shows the posterior distribution (superposterior in the case of MBA) for the global mean effect $\mu$, obtained using four different models: in addition to MBA and REMA, we used fixed-effects meta-analysis (FEMA, which is a special case of REMA, see Section~\ref{sec:rema_fema}), and a `naive' model corresponding to ordinary Bayesian inference using the means of the observed posteriors on $\theta_j$ as observed data.
In Section~\ref{sec:theory}, we formulate the theoretical framework underlying MBA as an extension of Bayes' rule from point-valued observations to distributional observations.
Notice that the naive model can be thought of as a special case of MBA, where the probability mass of each of the observed posteriors concentrate at a single point, namely the posterior mean. 

As expected, FEMA is clearly inappropriate in this situation, and results in a heavily biased and overly confident posterior. The naive model is less biased than FEMA as it makes use of information contained in the observed posteriors, but compared to MBA, it does not properly account for the uncertainty contained in them.
Both REMA and MBA result in posteriors with more spread, still assigning reasonably high probability mass to the neighbourhood around the true parameter value. 
Figure~\ref{fig:ma2_updated_posteriors} shows updated beliefs for the local effects, obtained using the update rule  of Equation~\eqref{eq:update_pi_j} in Section~\ref{sec:mba}. The updated beliefs exhibit shrinkage towards the global mean effect and, in this case, concentrate more accurately around the actual local effect values.
On the other hand, many of the REMA posteriors for the local effects, shown in Figure~\ref{fig:ma2_re_posteriors}, are biased and concentrate in regions away from true values.

\subsection{Example 2: Tuberculosis outbreak dynamics}
\label{subsec:tbc}

We now apply MBA to conduct meta-analysis of parameters regulating a stochastic birth-death (SBD) model proposed by \citet{Lintusaari+others:2019}, who used their model in a single-study setting to analyze tuberculosis outbreak data from the San Francisco Bay area, initially reported by \citet{Small+others:1994}. 
The goal of the analysis was to estimate disease transmission parameters from genotype data which, in contrast to outbreak models relying on count data, renders the likelihood function intractable and necessitates the use of likelihood-free inference \citep{Tanaka+others:2006}. 
Furthermore, such models are often complex in relation to the available data, which may result in poor identifiability, as discussed by \citet{Lintusaari+others:2016}. 
To alleviate the problem, they formulated their model as a mixture of stochastic processes, taking into account the individual transmission dynamics of different subpopulations. 
In our analyses, we focus on two key parameters of the model, $R_0^{(1)}$ and $R_0^{(2)}$, which are the reproductive numbers for two subpopulations: those that are compliant and non-compliant to treatment, respectively\footnote{Note that \citet{Lintusaari+others:2019} used the notation $R_0$ and $R_1$, respectively. The standard notation for the reproductive number is $R_0$}.

For our current experiment, we analyzed three additional data sets using the model of \citet{Lintusaari+others:2019}. These data sets reported tuberculosis outbreaks in Estonia \citep{Kruuner3339}, London \citep{Maguire617} and the Netherlands \citep{Netherlands}. 
For each data set, we independently conducted likelihood-free inference, generating $1000$ samples from the posteriors.
Following \citet{Lintusaari+others:2019}, we used the following six summary statistics for the ABC simulations: the number of observations, the total number of genotype clusters, the size of the largest cluster, the proportion of clusters of size two, the proportion of singleton clusters, and finally, the average successive difference in size among the four largest clusters. The original publication additionally used two summary statistics on the observation times of the largest cluster. While these improved model identifiability, such information was not available for the additional data sets analyzed in our current experiment.
We used a weighted Euclidean distance as dissimilarity function, with the same weights as \citet{Lintusaari+others:2019}.

Figure~\ref{fig:tb_abc_posteriors} shows the joint posterior distributions for the parameters $R_0^{(1)}$ and $R_0^{(2)}$, obtained individually for all four geographical locations using ABC.  
Compared to the San Francisco data, the posteriors computed on the remaining data sets, in particular London and the Netherlands, show symptoms of unidentifiability, with the posterior mass spread along lines between, and including, the boundaries of the parameter space. 
This can at least partly be attributed to these data sets being less informative than the San Francisco data set. A key question for our experiment then is whether we could borrow strength across the studies to improve the identifiability of the models computed on the remaining data sets. Additionally, it will be of interest to obtain an overall analysis of the central tendency of the reproductive numbers.

As in the previous experiment of Section~\ref{subsec:ma2}, we first define a model for the local effects $\theta_j = (R_{0,j}^{(1)}, R_{0,j}^{(2)})$, and assign priors for the components of $\varphi=(\mu, \Sigma_0)$, where $\mu = (\mu_1,\mu_2)$ is the global mean effect and $\Sigma_0$ is the covariance matrix. This results in the model:
\begin{gather*}
    \theta_j\sim \mathcal{N}_2(\mu, \Sigma_0), \quad
    \mu_1 \sim \mathrm{Gamma}(a_1, b_1),\quad
    \mu_2 \sim \mathrm{Gamma}(a_2, b_2), \quad
    \Sigma_0\sim \mathcal{W}^{-1}(\nu,\Psi). 
\end{gather*}
The hyperparameters were set as follows:
\[
a_1 = 0.12,\, b_1 = 0.36\quad 
a_2 =.030, \, b_2 = 0.05,\quad  \text{ and } \quad
\nu = 4,\quad  
\Psi = \begin{bmatrix}
4 & -0.1 \\
-0.1 & 0.01
\end{bmatrix}.
\]
We then follow the same procedure from the previous experiments to sample from the MBA-updated effects via SIR.

Note that due to the indirect nature in the relationship between the infection data and the parameters of interest, using the data to directly construct an estimator for the parameters of interest would be difficult.
While this does not pose a challenge in our framework, it renders the application of traditional meta-analysis approaches infeasible.

Figure~\ref{fig:tb_abalation} shows the updated beliefs for the local effects $\theta_1, \ldots, \theta_J$, after borrowing strength across the individual studies. While the updated beliefs clearly retain some of the individual characteristics of their original counterparts (e.g. a similar covariance structure), they exhibit a much more identifiable behavior. The superposterior of the overall mean of the reproductive numbers is shown in 
Figure~\ref{fig:tb_superposterior}.

\begin{figure}[H]
\centering
\includegraphics[width=0.85\linewidth]{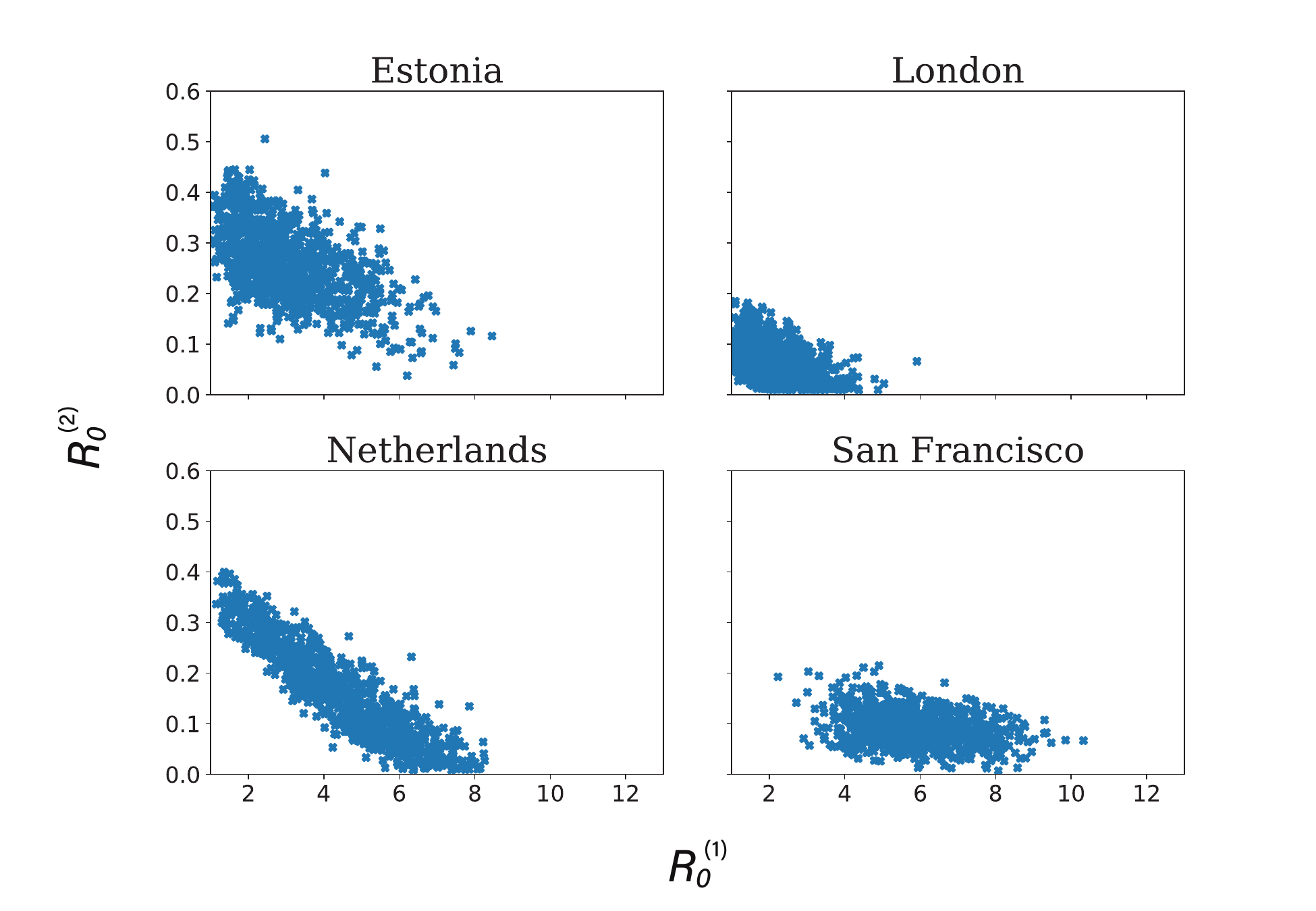}
\\
\caption{Posteriors for the reproductive numbers $(R_0^{(1)}, R_0^{(2)})$, individually obtained using ABC in four different studies on tuberculosis outbreak dynamics.}
\label{fig:tb_abc_posteriors}
\end{figure}

\begin{figure}[H]
\centering
\includegraphics[width=0.85\linewidth]{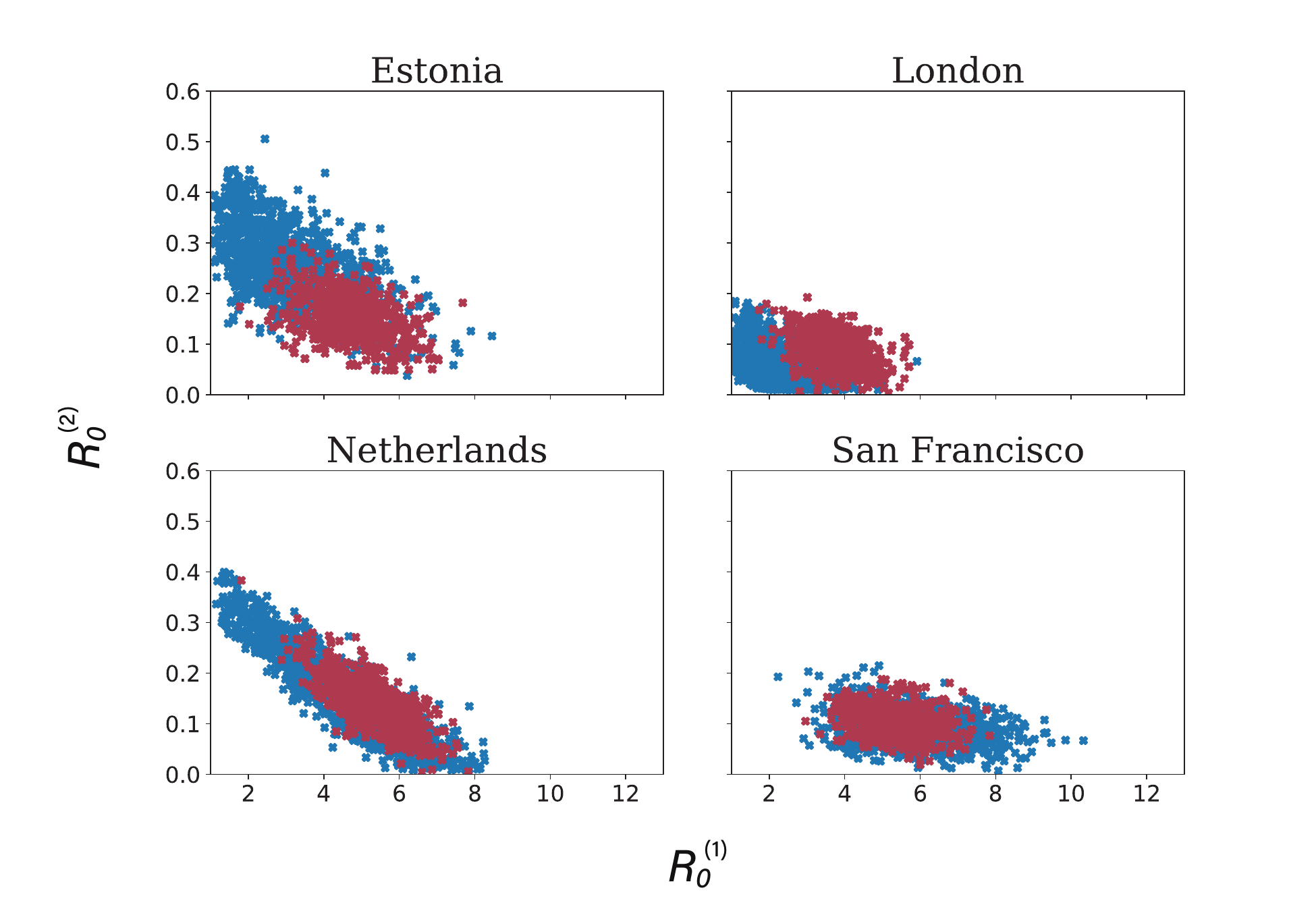}
\\
\caption{Posteriors for the reproductive numbers $(R_0^{(1)}, R_0^{(2)})$ updated using MBA (red), plotted on top of the original, individually obtained posteriors (blue).}
\label{fig:tb_abalation}
\end{figure}

\begin{figure}[H]
\centering
\includegraphics[width=0.42\linewidth]{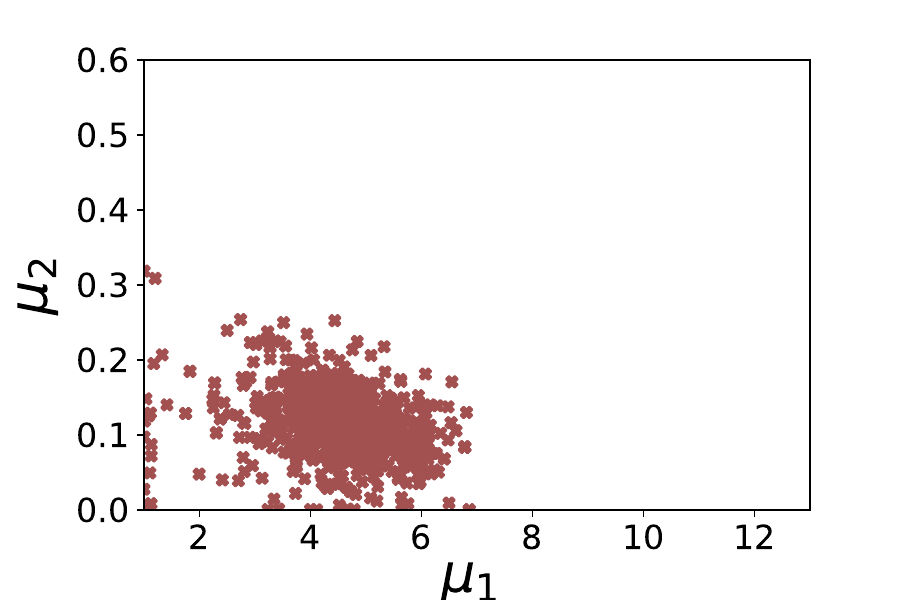}\\
\caption{MBA joint superposterior for the overall mean effect $\mu=(\mu_1, \mu_2)$. }
\label{fig:tb_superposterior}
\end{figure}

In the supplementary material (Appendix E) we provide a synthetic version of these tuberculosis experiments, again showing that sharing statistical strength across studies yields local posteriors with improved identifiability compared to their original counterparts.

\section{Bayesian inference with distributional observations}\label{sec:theory}

Overall, our approach to conduct meta-analysis from a set of related posteriors, MBA, can be seen more broadly as a framework for Bayesian inference with distributional observations.
This section develops a foundation for this framework, which serves as the theoretical justification for MBA. 
As we will see, this framework is very general and encompasses the standard settings of point-valued observations as special cases, given the appropriate restrictions.

We start our presentation by treating $\theta_{1}, \ldots, \theta_{J}$ as directly observable, point-valued random quantities. 
Then, we modify Bayes' rule to incorporate distributions $\Pi_1,\ldots, \Pi_J$ over $\theta_1, \ldots, \theta_J$ by substituting the conventional likelihood factors
with their expected values, arriving at the measure-theoretic version of Equation~\eqref{eq:overall_effect}.

\noindent\textbf{Remark:}
We switch notations slightly in this section: Instead of the probability density $q(\varphi)$ for $\varphi$ we shall use the probability measure $Q(B)$ for sets $B$ of possible values of $\varphi$; and accordingly, the measure-theoretic style shall be used for integrals. The fact that $Q(B) = 
\int_{B} Q(d\varphi) = \int_{B} q(\varphi)d\varphi$ may help readers to switch back and forth between the measure-theoretic notations and their counterparts in terms of densities, when they exist. 
A similar remark applies to all other variables and parameters.

To gain intuition, 
let us first suppose that $\theta_1,\ldots,\theta_J$ are observable and exchangeable random quantities. 
Following standard theory, de~Finetti's representation theorem \citep[e.g.][]{Schervish:1995} states that if 
$\theta_1,\theta_2,\ldots$ is an infinitely exchangeable sequence of random
quantities taking values in a Borel space $(\Theta$, $\mathcal{A})$,  
then there exists a probability measure $Q$ on a Borel space $(\Phi,\mathcal{B})$
such that the joint distribution of the subsequence $\theta_1, \ldots,\theta_J$, i.e. the \emph{predictive distribution}, has the `mixture' form
\begin{equation}\label{eq:parametric_deFinetti_measures}
\P(\theta_1 \in A_1, \ldots,\theta_J \in A_J) = \int_{\Phi} \prod_{j=1}^J P_{\varphi}(A_j) \ Q(d\varphi).
\end{equation}
It is understood that the equation holds for any sets $A_1,\ldots,A_J \in\mathcal{A}$. The details of the theorem are omitted for simplicity, but the essence of de Finetti's representation is establishing that exchangeability of $\theta_1,\theta_2,\ldots$ is equivalent to conditional independence of these quantities given the value $\varphi$ of a random parameter on a suitable space $\Phi$. The family $\{P_{\varphi}|\varphi\in\Phi\}$ consists of probability measures on $\Theta$, indexed by the parameter $\varphi$. 
Assuming that these measures have density functions $p(\cdot|\varphi):= dP_{\varphi}/d\lambda$ 
with respect to a reference (Lebesgue or counting) measure $\lambda$, and $Q$ has density $q$, one may rewrite  Equation~\eqref{eq:parametric_deFinetti_measures} in the following (arguably friendlier) form:

\begin{equation}\label{eq:parametric_deFinetti_densities}
\P(\theta_1 \in A_1, \ldots,\theta_J \in A_J) = \int_{\Phi} \prod_{j=1}^J \left[\int_{A_j}p(\theta_j|\varphi)\lambda(d \theta_j) \right] 
q(\varphi) d\varphi.
\tag{\ref*{eq:parametric_deFinetti_measures}'}
\end{equation}

For reference, the corresponding joint density, if it exists, then is as follows:
\begin{equation}
\label{eq:parametric_deFinetti_joint_density}
p(\theta_1, \ldots,\theta_J) = \int_{\Phi} \prod_{j=1}^J p(\theta_j|\varphi) 
q(\varphi) d\varphi.
\end{equation}

In the above standard setting, one may think of $Q$ as being a prior distribution on the parameter $\varphi$, and the Bayesian learning process works through updating $Q$ conditional on observed data. Following the above equations, the posterior distribution of $\varphi$, given observed values $\theta_{1}=t_1, \ldots,\theta_J=t_J$, has the form
\begin{equation}\label{eq:posterior}
Q(B|t_1,\ldots,t_J) = \frac{\int_{B} \prod_{j=1}^J p(t_j|\varphi) \, Q(d\varphi)}{\int_{\Phi} \prod_{j=1}^J p(t_j|\varphi) \, Q(d\varphi)}.
\end{equation}
The equation holds for any set $B\in\mathcal{B}$, where $\mathcal{B}$ is the Borel $\sigma$-algebra on $\Phi$. 
Let us emphasize that, as a preparatory step to build intuition, we have up until this point in the current section assumed that  the $\theta_{j}$ are directly observable, point-valued random quantities. 
This is in contrast with standard notational conventions, where $\theta$ usually denotes an unobservable parameter (cf. the hierarchical meta-analysis model described in Section~\ref{sec:intro}, where the point-valued observables are the summary statistics $D_j$).

We now consider a setting where the observables are distributions $\Pi_1,\ldots,\Pi_J$ on the values of $\theta_1,\ldots,\theta_J$ respectively. 
This setting therefore departs from the standard setting built for point-valued observations, and requires a different formulation.
We propose a simple modification as the basis to build an update rule for the new setting, substituting the likelihood factors $p(t_j|\varphi)$ of the standard setting with the \emph{expected likelihood factors}
$\int_{\Theta}p(\theta_j|\varphi)\Pi_j(d\theta_j)$, where the expectation is with respect to the observed distributions.
We argue that these factors are reasonable for the setting we are considering, where the observables are distributions; not only because they are computable on the basis of the observed distributions, but also because they lead to an update rule that retains some basic properties of the classical Bayesian update rules, see Section~\ref{sec:theoretical_properties}.

The proposed expected likelihood factors now lead to an update of the form
\begin{equation}\label{eq:posterior_uncertain_obs}
Q^*(B|\Pi_1,\ldots,\Pi_J) = \frac{\int_{B} \prod_{j=1}^J \left[\int_{\Theta}p(\theta_j|\varphi)\Pi_j(d\theta_j)\right] Q(d\varphi)}
{\int_{\Phi} \prod_{j=1}^J \left[ \int_{\Theta}p(\theta_j|\varphi)\Pi_j(d\theta_j)\right] Q(d\varphi)},
\end{equation}
for $B\in\mathcal{B}$,
where we write $Q^*(\cdot|\Pi_1,\ldots,\Pi_J)$ to denote conditioning on distributions, in analogy with conditioning on fully observed point values. We give more context for this choice of notation below in Section~\ref{sec:posterior_concentration}. 
We refer to the measure defined by Equation~\eqref{eq:posterior_uncertain_obs} as a \emph{superposterior distribution} to distinguish it from Equation~\eqref{eq:posterior} and to convey the notion of it being the result of an update on top of probability distributions $\Pi_j$.
It is easy to see that Equation~\eqref{eq:posterior} emerges as a special case of the superposterior by setting $\Pi_j$ to be $\delta_{t_j}$, the Dirac measure centered at $t_j$. This yields 
\[
\int_{\Theta} p(\theta_j|\varphi)\delta_{t_j}(d\theta_j) = p(t_j|\varphi),
\]
such that $Q^*(\cdot|\delta_{t_1},\ldots,\delta_{t_J}) = Q(\cdot|t_1,\ldots,t_J)$. 
Equation~\eqref{eq:posterior_uncertain_obs} also induces a joint distribution on the space $\Phi\times\Theta^J$, which we can marginalize with respect to $Q$ to obtain a predictive distribution:
\begin{equation}\label{eq:joint_belief}
\P^*(\theta_1 \in A_1, \ldots,\theta_J \in A_J) = \frac{\int_{\Phi} \prod_{j=1}^J \left[\int_{A_j}p(\theta_j|\varphi)\Pi_j(d\theta_j)\right] Q(d\varphi)}
{\int_{\Phi} \prod_{j=1}^J\left[ \int_{\Theta}p(\theta_j|\varphi)\Pi_{j}(d\theta_j)\right] Q(d\varphi)}.
\end{equation}

Throughout this work, we assumed that every $\Pi_j$ is a probability measure.
It is interesting to note, however, that if we allow $\Pi_j$ to be the Lebesgue (or counting) measure $\lambda$ for all $j$, which corresponds to having a uniformly distributed---possibly improper---belief about the value of $\theta_j$, then the updated measure $Q^*(\cdot|\Pi_1,\ldots,\Pi_J)$ in Equation~\eqref{eq:posterior_uncertain_obs} equals the prior probability measure $Q$. 
Moreover, with this choice of $\Pi_j$, Equation~\eqref{eq:joint_belief} reduces to the standard predictive distribution in Equation~\eqref{eq:parametric_deFinetti_measures}.

\subsection{Theoretical properties}\label{sec:theoretical_properties}

Two well-known properties of standard Bayesian inference, which are of practical relevance in our meta-analysis setting, are: (i) order-invariance in successive posterior updates, and (ii) posterior concentration. The former ensures that inferences conditional on exchangeable data are coherent. The latter tells us that the posterior distribution becomes increasingly informative about the quantity of interest, as we accumulate more observations.
We now briefly discuss these properties in the context of the framework introduced above.
Formal proofs are given in Appendix A and Appendix B.

\subsubsection{Order-invariance in successive updates}\label{sec:order_invariance}

Under the exchangeability assumption, standard Bayesian inference can be constructed as a sequence of successive updates, invariant to the order in which data are processed. It is straightforward to verify that the superposterior update defined in Equation~\eqref{eq:posterior_uncertain_obs} shares this property of being order-invariant. The proof of the following proposition is given in the supplementary material (Appendix A).
\begin{proposition}
The superposterior update defined in Equation~\eqref{eq:posterior_uncertain_obs} is 
invariant to permutations of the indices $1,\ldots,J$.
\end{proposition}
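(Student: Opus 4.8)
The plan is to reduce the statement to the elementary commutativity of finite products. For each $j$, introduce the \emph{expected likelihood contribution} $g_j:\Phi\to[0,\infty]$ defined by $g_j(\varphi):=\int_{\Theta}p(\theta_j\,|\,\varphi)\,\Pi_j(d\theta_j)$. The first step is to verify that each $g_j$ is a well-defined, nonnegative, $\mathcal{B}$-measurable function of $\varphi$: since $(\varphi,\theta)\mapsto p(\theta\,|\,\varphi)$ is jointly measurable and nonnegative, this is exactly Tonelli's theorem. Consequently the numerator and denominator of Equation~(\ref{eq:posterior_uncertain_obs}) are the integrals, with respect to $Q$, of the nonnegative function $\varphi\mapsto\prod_{j=1}^J g_j(\varphi)$ over $B$ and over $\Phi$ respectively, and both are well-defined elements of $[0,\infty]$.

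The second step is the core observation: for any permutation $\sigma$ of $\{1,\dots,J\}$ and every $\varphi\in\Phi$,
\[
\prod_{j=1}^J g_{\sigma(j)}(\varphi)\;=\;\prod_{j=1}^J g_j(\varphi),
\]
simply because multiplication in $\R$ (equivalently, in $[0,\infty]$ with the convention $0\cdot\infty=0$) is commutative and associative. Hence replacing $(\Pi_1,\dots,\Pi_J)$ by $(\Pi_{\sigma(1)},\dots,\Pi_{\sigma(J)})$ leaves the integrand of both the numerator and the denominator of Equation~(\ref{eq:posterior_uncertain_obs}) unchanged at every $\varphi$, and therefore leaves the ratio unchanged for every $B\in\mathcal{B}$. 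This already establishes the proposition, provided the denominator is strictly positive and finite so that $Q^*$ is a genuine probability measure; I would state this as a standing assumption, noting it holds whenever each $g_j$ is $Q$-integrable and not $Q$-a.e.\ zero.

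To connect the result with the ``successive updates'' reading in the surrounding discussion, I would append a short induction: updating $Q$ by the single belief $\Pi_1$ yields a measure $Q_1$ with $dQ_1/dQ\propto g_1$; updating $Q_1$ by $\Pi_2$ multiplies the density by $g_2$; and after $J$ steps one recovers precisely Equation~(\ref{eq:posterior_uncertain_obs}). Combined with the commutativity identity above, any processing order produces the same terminal measure. I do not expect any analytic difficulty anywhere in this argument; the only point requiring care — and thus the ``main obstacle'', such as it is — is bookkeeping: ensuring that every intermediate normalizing constant $\int_{\Phi}\prod_{i\le k} g_i\,dQ$ is positive and finite, so that each intermediate $Q_k$ in the sequential formulation is itself a probability measure, which is guaranteed by the same mild integrability and non-degeneracy conditions on the $g_j$'s.
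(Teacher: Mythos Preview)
Your proposal is correct. The paper's own proof takes a slightly different route: rather than arguing directly from the commutativity of the finite product $\prod_j g_j(\varphi)$, it interprets the proposition as order-invariance of \emph{successive} updates, and checks explicitly for $J=2$ that updating $Q$ first by $\Pi_1$ and then by $\Pi_2$ yields the batch formula, with the normalizing constants cancelling out. Your argument handles arbitrary $J$ and arbitrary permutations in one stroke via commutativity, and you relegate the sequential reading to an appended induction; the paper does the reverse, making the sequential computation primary and leaving the extension to general $J$ implicit. Your version is cleaner and more general, and it is more explicit about the mild regularity needed (positivity and finiteness of the normalizers); the paper's version has the virtue of displaying concretely how the intermediate normalizing constants cancel, which is the only point where a reader might pause.
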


\subsubsection{Concentration in the limit of infinite observations}
\label{sec:posterior_concentration}

Asymptotic theory states that if a consistent estimator of the true value (or an optimal one in terms of KL divergence) of the parameter $\varphi$ exists, then the posterior distribution in Equation~\eqref{eq:posterior} concentrates in a neighborhood of this value, as $J\rightarrow\infty$ \citep[e.g.][]{Schervish:1995}. 
Here we discuss conditions under which the same property holds for the measure $Q^*(\cdot|\Pi_1,\ldots,\Pi_J)$, defined in Equation~\eqref{eq:posterior_uncertain_obs}.

Our strategy is to first formulate a conceptual model for the observed distributions $\Pi_1,\ldots,\Pi_J \in \mathcal{P}(\Theta)$; where the notation $\mathcal{P}(\Theta)$ stands for the space of probability measures on $\Theta$.
To this end, consider the following:
\begin{subequations}
\begin{align}
\varphi &\sim Q\label{eq:hierarchical_uncertain_obs_Q}\\
\Pi_j &\sim H^{(j)}_{\varphi}.
\label{eq:hierarchical_uncertain_obs_G}
\end{align}
\end{subequations}
Thus, $H^{(j)}_{\varphi}$ is the distribution of $\Pi_j$, given $\varphi$. We assume that these distributions are defined by their densities $h_j(\cdot|\varphi):= dH^{(j)}_{\varphi}/d\kappa$ with respect to a dominating measure $\kappa$ on $\mathcal{P}(\Theta)$. 
Furthermore, assuming integrability, we take $h_j(\Pi_j|\varphi) := C_j \int_{\Theta}p(\theta_j|\varphi) \Pi_j(d\theta_j)$, where the $C_j$ are the normalizing factors such that $h_j$ integrates to 1. 
Then the probability measure $Q^*(B|\Pi_1,\ldots,\Pi_J)$ defined in Equation~\eqref{eq:posterior_uncertain_obs} takes the form 
\begin{equation}\label{eq:posterior_uncertain_obs_v3}
Q^*(B|\Pi_1,\ldots,\Pi_J) = \frac{\int_{B} \prod_{j=1}^J h_j(\Pi_j|\varphi) \, Q(d\varphi)}{\int_{\Phi} \prod_{j=1}^J h_j(\Pi_j|\varphi) \, Q(d\varphi)}.
\end{equation}

An essential difference between the standard posterior distribution in Equation~\eqref{eq:posterior} and that of  Equation~\eqref{eq:posterior_uncertain_obs_v3} is that, in the former, the observations are conditionally i.i.d., while in the latter they are conditionally independent but \emph{non-identically} distributed. 
Note that the observed distributions $\Pi_j$ might in general be the outputs of Bayesian analyses with different prior assumptions and design choices  
across the various studies. 
Furthermore, in our setting, the meta-analyst may not have access to the exact computational details of each study, and indeed, only access to the posterior distribution produced by each study is required. Hence, it is not reasonable to postulate homogeneity across the various studies.
Nonetheless, from a practical point of view, we assume that the 
posteriors in each study target the same set of parameters. 
Another important difference to highlight is that Equation~\eqref{eq:posterior_uncertain_obs_v3} is based on  \emph{distributional} observations, as our setting considers measure-valued observations as the inputs, whereas Equation~\eqref{eq:posterior} is based on point-valued observations, as per the standard setting.

One can interpret Equation~\eqref{eq:posterior_uncertain_obs_v3} as giving a representation of $Q^*(B|\Pi_1,\ldots,\Pi_J)$ which is sufficient to establish concentration of this distribution in the limit $J\to\infty$. 
These results are given in the supplementary material (Appendix B, 
where Section B.1 covers discrete parameter spaces, and  Section B.2 covers general parameter spaces).

For the case of discrete parameter spaces, the proof follows the basic structure found in many sources; interested readers are referred to \citet[App.~B, p.~588]{Gelman+others:2013} and \citet[Sec.~5.3, p.~286]{Bernardo+Smith:1994}.
Note however that these references assumed i.i.d. point-valued observations; whereas our setting deals with measure-valued observations which are independent but non-identically distributed.
A key step in proving the concentration result is to establish a limit for the sums of log-likelihood ratios $\sum_{j=1}^J\log (h_j(\Pi_j|\varphi_0)/h_j(\Pi_j|\varphi))$ for $\varphi\neq\varphi_0$, where $\varphi_0$ is a distinguishable parameter value which meets some technical requirements. 
This limit is established using a version of the strong law of large numbers that is valid for non-identically distributed random variables \citep[cf.][Theorem 2.3.10]{Sen+Singer:1993}. 
For continuous parameter spaces the argument is more intricate and requires the technical assumptions to be adjusted in some ways; however, the basic building blocks of the proof are analogous to those of the discrete case.
Interested readers are referred to 
\citet[Theorem~7.80]{Schervish:1995} for the result for continuous parameter space
and i.i.d. point-valued observations. Our result extends this theorem to the setting of observations which are measure-valued and independent but non-identically distributed.

In conclusion, if the superposterior distribution $Q^*(\cdot|\Pi_1,\ldots,\Pi_J)$ can be written in the form of Equation~\eqref{eq:posterior_uncertain_obs_v3} and some technical conditions are met, then concentration when $J\to\infty$ can be established,
using similar arguments to those that establish the concentration results for classical posterior distributions.

\section{Related work}\label{sec:related_work}

To the best of our knowledge, meta-analysis frameworks for combining posterior distributions have not been presented before. 
Nonetheless, similar elements can be found in previous works.

\subsection{Bayesian meta-analysis} 
In the context of conventional Bayesian meta-analysis, \citet{Lunn+others:2013} introduced a two-stage computational approach for fitting hierarchical models to individual-level data. 
In the first stage, posterior samples for local parameters are generated independently for each study, enabling study-specific complexities to be dealt with separately. In a second stage, the samples are used as proposals in Metropolis-Hastings updates within a Gibbs algorithm to fit the full hierarchical model.
While the paper focuses on a specific computational approach for hierarchical models, involving no propagation of local prior knowledge into the joint model, the idea of utilizing independently computed posterior samples to fit a joint model is common to our framework. 

In more recent work, \citet{Rodrigues+others:2016} developed a hierarchical Gaussian process prior to model a set of related density functions, where grouped data in the form of samples assumed to be drawn under each density function are available.
Despite a superficial similarity between their work and ours, the inferential goals in these works are very different. Specifically, the former is concerned with nonparametric estimation of group-specific densities, which is shown to be useful when the sample sizes in some or all of the groups are small.  
Thus, in cases where the number of posterior samples per study is limited (e.g. due to computational reasons), the method of \citet{Rodrigues+others:2016} could be used within our framework to provide density estimates for the initial beliefs. %

\subsection{Frequentist meta-analysis}
In the frequentist paradigm, \citet{Xie+others:2011} introduced a general framework for the combination of confidence distributions, a concept loosely related to Bayesian posteriors and deeply rooted in the frenquentist paradigm \citep[see][]{Schweder_Hjort:2002}. 
In the same realm, \citet{Liu2015} combine analyses from related results using a product of confidence distributions, under the assumption that known linear functions map the local effects to the overall effect.

\subsection{Bayesian updating using uncertain evidence}

Our work builds on the interpretation that each $\pi_j(\theta_j)$ is a probabilistic representation of belief \citep{Bernardo+Smith:1994}, which reflects uncertainty about the value of the corresponding local effect $\theta_j$.  
Updating prior knowledge (in our case regarding $\varphi$) subject to uncertain or `soft' evidence, represented as probability distributions on the local effects instead of point estimates, has been extensively studied as a philosophical topic in both statistics and artificial intelligence. 
The most well-known of such update rules, 
Jeffrey's rule of conditioning \citep{Diaconis+Zabell:1982,Jeffrey:2004,Smets:1993,Zhao+Osherson:2010}, 
computes the updated probability for an event as a weighted average of the posterior probabilities under all possible values of the evidence.
Due to its construction, 
Jeffrey's rule is applicable in simple discrete cases but becomes computationally infeasible for more complex models with continuous variables.
It is also well known that Jeffrey's update rule is in general not invariant to the order in which data are processed, unlike our proposed update, as discussed in Section~\ref{sec:order_invariance}. 

\subsection{Distributed Bayesian inference}
We finally note that while our framework can be described as an approach to combine (information provided by) a set of pre-computed posteriors, it is substantially different from works in the field of \emph{Embarrassing Parallel} (or divide-and-conquer) inference~\citep[e.g.][]{Neiswanger+Xing:2017, Wang+others:2018}. 
In these works, given a partitioned dataset $\mathcal{X} = \cup_{j=1}^{J} \mathcal{X}_j$, a posterior $p(\theta|\mathcal{X})$ is approximated by first independently computing a \emph{subposterior} $p(\mathcal{X}_j|\theta) p(\theta)^{1/J}$ for each data partition, and then combining these as $p(\theta|\mathcal{X}) \propto \prod_{j=1}^{J} p(\mathcal{X}_j|\theta) p(\theta)^{1/J}$.
In contrast, the goal of our framework is to use posteriors computed over \emph{distinct} quantities $\theta_1,\ldots,\theta_J$ related through exchangeability, to provide information about a shared global quantity $\varphi$.

\section{Conclusion and future work}\label{sec:discussion}

The natural outcome of a Bayesian analysis is a posterior distribution over quantities of interest.
Meta-analysis methods combine results from multiple related statistical studies into a consensus analysis,
and to the best of our knowledge all previous approaches to conduct meta-analyses have used summary statistics (or individual-level data) as the observable inputs.
In this paper, 
we have developed a framework that uses distributions as the observable inputs to conduct meta-analyses, thus enabling the combination of posteriors from multiple related Bayesian analyses.
The framework builds on standard Bayesian inference over exchangeable quantities, by treating each observed posterior as data observed with uncertainty. 
We have also shown how the framework can be used to update independently computed, study-specific posteriors \emph{post-hoc} through sharing of statistical strength. This is analogous to how estimates for related parameters share strength between each other in hierarchical models \citep[e.g.][]{Gelman+others:2013}.

In many fields, it has become common practice to store data sets in dedicated repositories to be reused for the benefit of the entire research community. 
Given the view taken in this work, that posterior distributions can be seen as a special kind of observational data, we believe that in many cases it would be equally beneficial to make full posterior distributions available for reuse. 
This would enable posteriors from potentially time-consuming and costly Bayesian analyses to be used as a basis for new studies.
Indeed, even if the original data, the model and the code implementing it were available, reproducing posterior distributions could require a substantial computational effort.
While in this work we have used posterior samples as a representation for study-specific posteriors, the optimal format for storing and sharing posterior distributions is a question for future research.
In addition to ways of making posteriors publicly available, more research is needed on developing methods to make appropriate use of the information they provide.
The current work is a first step in this direction and our hope is that it will inspire other researchers to make further advances to this end.

Furthermore, we believe exploring MBA as a framework for Bayesian inference using uncertain evidence is a promising direction for future work. For instance, in the context of multiple imputations,  MBA could be used to directly incorporate the posterior predictive distributions for the missing values as beliefs, avoiding repeated MCMC runs for multiple imputed datasets~\citep{Missing}. Given that MBA accommodates the combination of posteriors and point estimates in a single meta-analysis, another relevant direction for future work is to extend the analysis to these settings; however further research would be required for a careful study of the properties of such hybrid  combinations of posteriors and point estimates.

\section*{Acknowledgments}

This work was supported by the Research Council of Finland (Flagship programme: Finnish Center for Artificial Intelligence FCAI, 359207), EU Horizon 2020 (European Network of AI Excellence Centres ELISE, grant agreement 951847), UKRI Turing AI World-Leading Researcher Fellowship (EP/W002973/1), Fundação Carlos Chagas Filho de Amparo à Pesquisa do Estado do Rio de Janeiro FAPERJ (SEI-260003/000709/2023), São Paulo Research Foundation FAPESP (2023/00815-6), Conselho Nacional de Desenvolvimento Científico e Tecnológico CNPq (404336/2023-0). We also acknowledge the computational resources provided by the Aalto Science-IT Project from Computer Science IT.

\bibliographystyle{chicago}
\bibliography{BA_biblia.bib}



\renewcommand{\theHsection}{A\arabic{section}}

\newpage

\makeatletter
\providecommand{\maketitleappendix}{}
\renewcommand{\maketitleappendix}{%
    \title{\LARGE\textbf\textsf
    Supplementary Material 
    }
}
\maketitleappendix
\appendix

\section{Proof of Proposition 2.1}
\label{sec:proof_order_invariance}

It suffices for us to verify the claim for $J=2$. Beginning with $J=1$, we update the probability $Q(B)$ into $Q^*(B|\Pi_1)$ using Equation~\eqref{eq:posterior_uncertain_obs} in the main text: 
\[
Q^*(B|\Pi_1) = \frac{\int_{B} \int_{\Theta}p(\theta_1|\varphi)\Pi_1(d\theta_1) Q(d\varphi)}
{\int_{\Phi} \int_{\Theta}p(\theta_1|\varphi)\Pi_1(d\theta_1) Q(d\varphi)}.
\]
Notice that $Q^*(\cdot|\Pi_1)$ is absolutely continuous with respect to $Q$ and the Radon-Nikodym derivative is given by 
$
\frac{dQ^*(\cdot|\Pi_1)}{dQ}(\varphi)
= C\int_{\Theta}p(\theta_1|\varphi)\Pi_1(d\theta_1)
$
with normalizing factor $C >0$.

Then, we reapply Equation~\eqref{eq:posterior_uncertain_obs} to update $Q^*(B|\Pi_1)$ into $Q^*(B|\Pi_1,\Pi_2)$:
\begin{align*}
Q^*(B|\Pi_1,\Pi_2) 
&= \frac{\int_{B} \int_{\Theta}p(\theta_2|\varphi)\Pi_2(d\theta_2 ) Q^*(d\varphi|\Pi_1) }
{\int_{\Phi} \int_{\Theta}p(\theta_2|\varphi)\Pi_2(d\theta_2) Q^*(d\varphi|\Pi_1)}\\
&= \frac{
\int_{B} \int_{\Theta}p(\theta_2|\varphi)\Pi_2(d\theta_2 ) \frac{dQ^*(\cdot|\Pi_1)}{dQ}(\varphi)\cdot Q(d\varphi)
}
{
\int_{\Phi} \int_{\Theta}p(\theta_2|\varphi)\Pi_2(d\theta_2) \frac{dQ^*(\cdot|\Pi_1)}{dQ}(\varphi)\cdot Q(d\varphi)
}
\\
&= \frac{\int_{B} \prod_{j=1}^2 \left[\int_{\Theta}p(\theta_j|\varphi)\Pi_j(d\theta_j)\right] Q(d\varphi)}
{\int_{\Phi} \prod_{j=1}^2 \left[ \int_{\Theta}p(\theta_j|\varphi)\Pi_j(d\theta_j)\right] Q(d\varphi)},
\end{align*}
which is equivalent to a direct application of the equation for $J=2$, and independent of the order in which $\Pi_1$ and $\Pi_2$ are processed.

\section{Proof of posterior concentration}
\label{sec:proof_pc}

We quote a strong law of large numbers which is valid for independent but not necessarily identically distributed observations \citep[cf.][Theorem 2.3.10]{Sen+Singer:1993}:
\begin{theorem}
\label{thm:sen_singer_appendix}
Assume that the random variables $(\xi_j)_{j\geq 1}$ are independent. Assume further that 
$\E(\xi_j)=\mu_j$ and $\operatorname{Var}(\xi_j)=\sigma_j^2$ exist for all $j\geq 1$, and $\sum_{j\geq 1} j^{-2} \sigma_j^2 < \infty$.
Then in the limit when $J \to \infty$ we have:
\[
\frac{1}{J}\sum_{j=1}^J \xi_j - \frac{1}{J}\sum_{j=1}^J \mu_j \xrightarrow{\text{a.s.}} 0.
\]
\end{theorem}

We recall the definition of the Kullback-Leibler functional, also called KL divergence. For probability distributions $\nu_0, \nu$ on the same space, 
\[
\KL(\nu_0;\nu) 
= \int \log\left( \frac{d\nu_0}{d\nu} \right) d\nu_0
\]
whenever $\nu_0$ is absolutely continuous with respect to $\nu$, and $\KL(\nu_0;\nu)=\infty$ otherwise.

\subsection{Discrete parameter space}\label{sec:discrete_space}

Here, we give an elementary proof of concentration of the superposterior distribution $Q^*(\cdot|\Pi_1,\ldots,\Pi_J)$ for discrete parameter spaces, in the limit of infinite observations.   
The proof follows the basic structure found in many sources, interested readers are referred to \citet[App.~B, p.~588]{Gelman+others:2013} where the basic proof is outlined for discrete spaces and i.i.d. observations, see also \citet[Sec.~5.3, p.~286]{Bernardo+Smith:1994}.

Assuming that there exists, for all $j$, a unique minimizer $\varphi_0$ of the KL divergence from the true distribution of $\Pi_j$ to the parametrized representation $H_\varphi^{(j)}$ as in Section~\ref{sec:posterior_concentration}, a key step in proving the said concentration is to establish a limit for the sums of the log-likelihood ratios $\sum_{j=1}^J\log(h_j(\Pi_j|\varphi_0)/h_j(\Pi_j|\varphi))$ for $\varphi\neq\varphi_0$. Notice that in this setting the summands are independent but non-identically distributed random variables, hence Theorem~\ref{thm:sen_singer_appendix} may be applied to obtain this limit, provided that its conditions are satisfied.

\begin{theorem}
\label{thm:concentration_discrete}
Assume the model (\ref{eq:hierarchical_uncertain_obs_Q})--(\ref{eq:hierarchical_uncertain_obs_G}).
Let $\Pi_1,\ldots,\Pi_J$ be distributional observations, and let $\bigl(\{H^{(j)}_{\varphi}|\varphi\in\Phi\}\bigr)_{j=1}^J$ be the corresponding family of marginals. 
Suppose that 
\begin{itemize}
\item[(i)] $\Phi$ is a countable set (possibly finite).
\end{itemize}
Furthermore, suppose $\varphi_0\in\Phi$ is a distinguishable parameter value in the sense of satisfying
\begin{itemize}
\item[(ii)] $Q(\varphi_0)>0$, and 
\item[(iii)]$\min_{\varphi\in\Phi\setminus\{\varphi_0\}} \KL\bigl(H^{(j)}_{\varphi_0};H^{(j)}_{\varphi}\bigr) > 0$, for all $j$.
\end{itemize}
Then, provided that the log-likelihood ratio terms $\xi_j:=\log\frac{h_j(\Pi_j|\varphi_0)}{h_j(\Pi_j|\varphi)}$ for $\varphi\neq\varphi_0$ satisfy the conditions of Theorem~\ref{thm:sen_singer_appendix}, the superposterior satisfies $Q^*(\varphi_0|\Pi_1\ldots,\Pi_J)\rightarrow 1$ as $J\rightarrow\infty$.
\end{theorem}

\begin{proof}
For any $\varphi\neq\varphi_0$, the log posterior odds can written as
\begin{equation}\label{eq:log_posterior_odds}
\log\frac{Q^*\left(\varphi_0|\Pi_1,\ldots,\Pi_J\right)}{Q^*\left(\varphi|\Pi_1,\ldots,\Pi_J\right)} 
= \log\frac{Q(\varphi_0)}{Q(\varphi)} + 
\sum_{j=1}^J\log \frac{h_j(\Pi_j|\varphi_0)}{h_j(\Pi_j|\varphi)},
\end{equation}
where the second term is a sum of $J$ independent but non-identically distributed random variables. By Theorem~\ref{thm:sen_singer_appendix}, we have that 
\[
\frac{1}{J}\sum_{j=1}^J\log \frac{h_j(\Pi_j|\varphi_0)}{h_j(\Pi_j|\varphi)}
- \frac{1}{J}\sum_{j=1}^J \int \left(\log \frac{h_j(\Pi_j|\varphi_0)}{h_j(\Pi_j|\varphi)}\right) H^{(j)}_{\varphi_0}(d\Pi_j)
\longrightarrow 0
\]
with probability 1, as $J\rightarrow\infty$. 
By condition (iii) we have
\[
\frac{1}{J}\sum_{j=1}^J \int \left(\log \frac{h_j(\Pi_j|\varphi_0)}{h_j(\Pi_j|\varphi)}\right) H^{(j)}_{\varphi_0}(d\Pi_j) 
\geq \min_{\varphi\in\Phi\setminus\{\varphi_0\}} \KL\bigl(H^{(j)}_{\varphi_0};H^{(j)}_{\varphi}\bigr) > 0,
\]
and consequently,
\[
\sum_{j=1}^J\log \frac{h_j(\Pi_j|\varphi_0)}{h_j(\Pi_j|\varphi)}
\longrightarrow \infty.
\]
Since $Q(\varphi_0)>0$, the entire expression (\ref{eq:log_posterior_odds}) approaches $\infty$ as $J\rightarrow \infty$, which implies that $Q^*(\varphi|\Pi_1,\ldots,\Pi_J)\rightarrow 0$ for $\varphi\neq\varphi_0$ and, since $\sum_{\varphi\in\Phi}Q(\varphi)=1$ ($Q$ is a probability), necessarily $Q^*(\varphi_0|\Pi_1,\ldots,\Pi_J)\rightarrow 1$.
\end{proof}

The interested readers may take note that this proof essentially imitated the one outlined in \citet[App.~B]{Gelman+others:2013}, 
with suitable adaptations to our setting.

\subsection{Continuous parameter space}\label{sec:continuous_space}

The space $\Theta$ encompasses the possible values of the effects of interest, and throughout the remaining of this section $\mathcal{P}(\Theta)$ is the space of probability measures on $\Theta$.

Recall (as per de Finetti's theorem) that
$Q$ is a probability measure on a Borel space $(\Phi,\mathcal{B})$
such that the joint distribution of $\theta_1, \ldots,\theta_J$ has the form given in Equation~\eqref{eq:parametric_deFinetti_measures}.

Fix $\varphi_0\in\Phi$, and for any $n \geq 1$ we define
\[
I_{n}(\varphi_0;\varphi) 
= \frac{1}{n} \sum_{j=1}^{n} \KL(H^{(j)}_{\varphi_0};H^{(j)}_{\varphi}).
\]

For any $\varepsilon>0$ and positive integers $n$, let $D_{n,\varepsilon}(\varphi_0) := \{ \varphi\in\Phi \ \vert \ I_{n}(\varphi_0;\varphi) < \varepsilon \}$, and define
$D_{\varepsilon}(\varphi_0) := \bigcap_{n\geq 1} D_{n,\varepsilon}(\varphi_0)$.

For any Borel set $A \subset \Phi$ define
\[
Z(A,\Pi_j) 
= \inf_{\psi \in A} \log\left( \frac{h_j(\Pi_j|\varphi_0)}{h_j(\Pi_j|\psi)} \right).
\]

We shall write $Q(D)$ for the $Q$-measure of sets $D$ in the corresponding space, and $H^{(j)}_{\varphi}[Z]$ for the expectation of random variables $Z$ with respect to the distribution $H^{(j)}_{\varphi}$.

After these technical definitions, we state and prove our concentration result, which builds on \citet[Theorem~7.80]{Schervish:1995}, with suitable adaptations to our setting.

\begin{theorem}
\label{thm:concentration_continuous}
Assume the model (\ref{eq:hierarchical_uncertain_obs_Q})--(\ref{eq:hierarchical_uncertain_obs_G}).
Let $\Pi_1,\ldots,\Pi_J$ be distributional observations, and let $\bigl(\{H^{(j)}_{\varphi}|\varphi\in\Phi\}\bigr)_{j=1}^J$ be the corresponding family of marginals. 
Suppose that 
\begin{itemize}
\item[(i)] $\Phi$ is a compact metric space.
\end{itemize}
Furthermore, suppose $\varphi_0\in\Phi$ is a distinguishable parameter value in the sense of satisfying
\begin{itemize}
\item[(ii)]  $Q(D_{\varepsilon}(\varphi_0)) > 0$ for every $\varepsilon>0$, and
\item[(iii)] For each $\varphi\neq\varphi_0$ there is an open set $N_{\varphi} \ni \varphi$ such that $\inf_{j\geq 1} H^{(j)}_{\varphi_0}[Z(N_{\varphi},\Pi_j)] > 0$.
\end{itemize}

Then for every $\varepsilon>0$ and open set $C_0 \subset\Phi$ containing $D_{\varepsilon}(\varphi_0)$,
the superposterior satisfies $Q^*(C_0|\Pi_1\ldots,\Pi_J)\rightarrow 1$ almost surely as $J\rightarrow\infty$.
\end{theorem}

\begin{proof}
Let $\Pi = (\Pi_1,\Pi_2,\ldots)\in\mathcal{P}^{\infty}$, where $\mathcal{P}=\mathcal{P}(\Theta)$ is the space of probability measures on $\Theta$.

Recall that $H^{(j)}_{\varphi}$ is the marginal conditional distribution of $\Pi_j$ given $\varphi$, and define the product distribution $H_{\varphi} = \bigotimes_{j=1}^{\infty} H^{(j)}_{\varphi}$ over $\mathcal{P}^{\infty}$.

As defined in Section~\ref{sec:posterior_concentration}, let $h_j(\Pi_j|\varphi)$ be the density of $H^{(j)}_{\varphi}$ with respect to some base measure.
For any $J \geq 1$ define
\[
L_J(\varphi,\Pi)
= \frac{1}{J} \sum_{j=1}^{J} \log\left( \frac{h_j(\Pi_j|\varphi_0)}{h_j(\Pi_j|\varphi)} \right).
\]

Given $\varepsilon>0$, let $C_0 \subset\Phi$ be an open set containing $D_{\varepsilon} = D_{\varepsilon}(\varphi_0)$, and let $C_0^c = \Phi\setminus C_0$ denote the complement of $C_0$ (with respect to the space $\Phi$).

The `posterior odds' of $C_0$ is
\begin{align}
    \frac{Q^*\left(C_0|\Pi_1,\ldots,\Pi_J\right)}%
    {Q^*\left(C_0^c|\Pi_1,\ldots,\Pi_J\right)} 
    &=\frac{\int_{C_0} Q^*\left(d\varphi|\Pi_1,\ldots,\Pi_J\right)}%
    {\int_{C_0^c} Q^*\left(d\varphi|\Pi_1,\ldots,\Pi_J\right)} 
    \label{eq:posterior_odds}
    \\[1mm]
    &= \frac{\int_{C_0} \prod_{j=1}^J h_j(\Pi_j|\varphi) Q(d\varphi)}%
    {\int_{C_0^c} \prod_{j=1}^J h_j(\Pi_j|\varphi) Q(d\varphi)} \nonumber
    \\[1mm]
    &= \frac{\int_{C_0} \exp\{-J L_J(\varphi,\Pi)\} Q(d\varphi)}%
    {\int_{C_0^c} \exp\{-J L_J(\varphi,\Pi)\} Q(d\varphi)}.
    \label{eq:posterior_odds_other}
\end{align}

The idea of the proof is to find a lowed bound on the numerator of Eq.~\eqref{eq:posterior_odds_other} and an upper bound on the denominator of Eq.~\eqref{eq:posterior_odds_other} such that their ratio goes to $\infty$.

For the denominator of Eq.~\eqref{eq:posterior_odds_other}, we construct a finite list of open sets $C_1,\ldots,C_m \subset \Phi$ so that $C_0^c \subset C_1 \cup \cdots \cup C_m$, and let $c_i := \inf_{j\geq 1} H^{(j)}_{\varphi_0}[Z(C_i,\Pi_j)]$ so that $c_i>0$ by assumption. For any $A \subset \Phi$ we have
\[
\inf_{\varphi \in A} L_J(\varphi,\Pi)
\geq \frac{1}{J} \sum_{j=1}^{J} Z(A,\Pi_j).
\]
Then the denominator in Eq.~\eqref{eq:posterior_odds_other} is at most
\begin{align*}
    \sum_{i=1}^{m} \int_{C_i} \exp\{-J L_J(\varphi,\Pi)\} Q(d\varphi)
    \leq \sum_{i=1}^{m} \sup_{\varphi\in C_i} \exp\{-J L_J(\varphi,\Pi)\} Q(C_i)
    \\[1mm]
    \leq \sum_{i=1}^{m} \exp\left\{-\sum_{j=1}^{J} Z(C_i; \Pi_j) \right\} Q(C_i).
\end{align*}
For each $i \in [m]$, by the strong law of large numbers, there exists a set $B_i \subset \mathcal{P}^{\infty}$ such that $H_{\varphi_0}(B_i) = 1$ and such that, for every $\Pi \in B_i$ there exists an integer $K_i(\Pi)$ such that $J \geq K_i(\Pi)$ implies $J^{-1}\sum_{j=1}^{J} Z(C_i;\Pi_j) > c_i/2 >0$.
Let $c = \min\{ c_1,\ldots,c_m \}$, $B = \cap_{i=1}^{m} B_i$, and $K(\Pi) = \max\{ K_1(\Pi),\ldots,K_m(\Pi) \}$.
Then for each $\Pi \in B$ and $J \geq K(\Pi)$, the denominator in Eq.~\eqref{eq:posterior_odds_other} is at most $\exp(-Jc/2)$.

For the numerator of Eq.~\eqref{eq:posterior_odds_other}, let $\delta\in(0,\min\{\varepsilon,c/2\})$. For each $\Pi$ and $\varphi$, define 
\begin{align*}
    W_J(\Pi) &= \{ \varphi \ : \ L_k(\varphi,\Pi) \leq I_k(\varphi_0;\varphi) + \delta \ \text{ for all } k \geq J \},
    \\
    V_J(\varphi) &= \{ \Pi \ : \ L_k(\varphi,\Pi) \leq I_k(\varphi_0;\varphi) + \delta \ \text{ for all } k \geq J \}.
\end{align*}
For convenience we denote $D_{\delta} = D_{\delta}(\varphi_0)$.
For each $\varphi$, by the strong law of large numbers we have $L_J(\varphi,\Pi) - I_J(\varphi_0;\varphi) \to 0$  almost surely [$H_{\varphi_0}$], so then $H_{\varphi_0}(\cup_{J=1}^{\infty} V_J(\varphi)) = 1$. This fact, together with the fact that the sets $V_J(\varphi)$ are increasing ($V_{J}(\varphi) \subset V_{J+1}(\varphi)$) and the fact that $\Pi\in V_J(\varphi)$ if and only if $\varphi \in W_J(\Pi)$, allow us to write
\begin{align*}
    Q(D_{\delta})
    &= \int_{D_{\delta}} H_{\varphi_0} \left( \bigcup_{J=1}^{\infty} V_J(\varphi)\right) Q(d\varphi)
    \\
    &= \lim_{J \to\infty} \int_{D_{\delta}} H_{\varphi_0}(V_J(\varphi)) Q(d\varphi)
    \\
    &= \lim_{J \to\infty} \int_{D_{\delta}} \int_{\mathcal{P}^{\infty}} \mathbb{I}[\Pi\in V_J(\varphi)] H_{\varphi_0}(d\Pi) Q(d\varphi)
    \\
    &= \lim_{J \to\infty} \int_{\mathcal{P}^{\infty}} \int_{D_{\delta}} \mathbb{I}[\Pi\in V_J(\varphi)] Q(d\varphi) H_{\varphi_0}(d\Pi)
    \\
    &= \lim_{J \to\infty} \int_{\mathcal{P}^{\infty}} \int_{D_{\delta}} \mathbb{I}[\varphi\in W_J(\Pi)] Q(d\varphi) H_{\varphi_0}(d\Pi)
    \\
    &= \lim_{J \to\infty} \int_{\mathcal{P}^{\infty}} Q(D_{\delta} \cap W_J(\Pi)) H_{\varphi_0}(d\Pi)
    \\
    &= \int_{\mathcal{P}^{\infty}} \lim_{J \to\infty} Q(D_{\delta} \cap W_J(\Pi)) H_{\varphi_0}(d\Pi).
\end{align*}
Now notice that since $Q(D_{\delta} \cap W_J(\Pi)) \leq Q(D_{\delta})$ for all $\Pi$ and $J$, we then have
\[
\lim_{J \to\infty} Q(D_{\delta} \cap W_J(\Pi))
= Q(D_{\delta}) \ \text{ almost surely } [H_{\varphi_0}]
\]
because strict inequality with positive probability would contradict the above string of equalities (this is a measure-theory argument). So, there is a set $B' \subset\mathcal{P}^{\infty}$ with $H_{\varphi_0}(B') = 1$ and for every $\Pi\in B'$ there exists a positive integer $K'(\Pi)$ such that $J \geq K'(\Pi)$ implies $Q(D_{\delta} \cap W_J(\Pi)) > Q(D_{\delta})/2$.
So, if $\Pi\in B'$ and $J \geq K'(\Pi)$, then the numerator in Eq.~\eqref{eq:posterior_odds_other} is at least
\begin{align*}
    \int_{D_{\delta} \cap W_J(\Pi)} \exp\{-J[I(\varphi_0;\varphi) + \delta]\} Q(d\varphi)
    &\geq \frac{1}{2}\exp(-2J\delta) Q(D_{\delta})
    \\
    &\geq \frac{1}{2}\exp(-Jc/4) Q(D_{\delta}),
\end{align*}
where the first inequality is because $I_J(\varphi_0;\varphi) \leq \delta$ for $\varphi\in D_{\delta}$, and the second one is because $\delta < c/2$
by the choice of $\delta$.

It follows that if $\Pi \in B \cap B'$ and $J \geq \max\{ K(\Pi), K'(\Pi) \}$, then the ratio in Eq.~\eqref{eq:posterior_odds_other} is at least $Q(D_{\delta})\exp(Jc/4)/2$, which goes to $\infty$ with $J$.

To finish the proof, notice that the ratio in Eq.~\eqref{eq:posterior_odds_other} equals the odds ratio in Eq.~\eqref{eq:posterior_odds}, meaning $Q^*\left(C_0|\Pi_1,\ldots,\Pi_J\right)/Q^*\left(C_0^c|\Pi_1,\ldots,\Pi_J\right)$, so then $Q^*\left(C_0^c|\Pi_1,\ldots,\Pi_J\right) \to 0$ almost surely [$H_{\varphi_0}$] when $J\to\infty$, and necessarily $Q^*\left(C_0|\Pi_1,\ldots,\Pi_J\right) \to 1$.
\end{proof}

The interested readers may take note that this proof follows similar steps as the proof of \citet[Theorem~7.80]{Schervish:1995}, with suitable adaptations to our setting.

Some comments regarding the `distinguishable $\varphi_0$' satisfying conditions (ii) and (iii) of our Theorem~\ref{thm:concentration_continuous}: This parameter value receives positive probability mass from the prior (condition (ii)) and it is a unique minimizer of the KL functional (condition (iii)), which are standard requirements for ensuring model identifiability and posterior consistency \citep[e.g.][]{Bernardo+Smith:1994,Schervish:1995,Van-der-Vaart-1998,Gelman+others:2013}.

\section{Interpretation of MBA as message passing}\label{sec:bp}

The central updates of the MBA framework, Equations (\ref{eq:posterior_uncertain_obs_density}) and (\ref{eq:update_pi_j}) in the main text, 
can also be viewed within the formalism of probabilistic graphical models. 
This provides both an intuitive interpretation and a visualization 
of the updates,  
and gives a straightforward way of extending the framework to more complex model structures. 
To elaborate further on this, 
consider a tree-structured undirected graphical model with $J$ leaf nodes and a root. 
This is a special case of a pairwise Markov network \citep{Koller+Friedman:2009}, where all factors are associated with single variables or pairs of variables, referred to as \emph{node potentials} and \emph{edge potentials}, respectively. Note that the potential functions are simply non-negative functions, which may not integrate to 1. The joint distribution is the normalized product of all potentials. 
Choosing the node potentials as $\pi_j(\theta_j), j=1,\ldots,J$ and $q(\varphi)$, and the edge potentials as $p(\theta_j|\varphi)$, the model has the joint density
\begin{equation*}
\frac{1}{Z}q(\varphi)\prod_{j=1}^J p(\theta_j|\varphi)\pi_j(\theta_j),
\end{equation*}
where $Z$ is a normalizing constant. 
Finding the  
marginal density of $\varphi$ can then be interpreted as propagating beliefs from each of the leaf nodes up to the root node in the form of messages, a process known as \emph{message passing} or belief propagation \cite[e.g.][]{Yedidia+others:2001}.
To that end, we specify the following messages from the $j$th leaf node to the root:
\begin{equation}\label{eq:leaf_root_message}
m_{\theta_j\rightarrow \varphi}(\varphi)\propto \int p(\theta_j|\varphi)\pi_j(\theta_j) d \theta_j .
\end{equation}
 The initial belief $q(\varphi)$ on $\varphi$ is then updated according to
\begin{equation}\label{eq:bp_vaprhi_marginal}
q^*(\varphi) \propto q(\varphi)\prod_{j=1}^J m_{\theta_j\rightarrow \varphi}(\varphi),
\end{equation}
which is exactly equal to Equation~\eqref{eq:posterior_uncertain_obs_density}, and illustrated in Figure~\ref{fig:bp_varphi}.

In a similar way, we may pass information to any single leaf node from the remaining leaf nodes. 
We now specify two kinds of messages: from leaf nodes indexed by $j\in\J\setminus j'$ to the root node, as given by Equation~\eqref{eq:leaf_root_message}, and from the root node to the $j'$th leaf node,
\begin{equation*}\label{eq:root_leaf_message}
m_{\varphi\rightarrow \theta_{j'}}(\theta_{j'})
\propto \int_{\Phi}p(\theta_{j'}|\varphi)q(\varphi)\prod_{j\in\J\setminus j'} m_{\theta_j\rightarrow \varphi}(\varphi)\, d\varphi.
\end{equation*}
The updated belief over $\theta_{j'}$ is then 
\begin{equation}\label{eq:bp_theta_marginal}
\pi_{j'}^*(\theta_{j'}) \propto 
\pi_{j'}(\theta_{j'}) m_{\varphi\rightarrow \theta_{j'}}(\theta_{j'}),
\end{equation}
which is exactly equal to Equation~\eqref{eq:update_pi_j}, and illustrated in Figure~\ref{fig:bp_theta}. 

\begin{figure}[t!]
    \centering
    \begin{subfigure}[t]{0.475\linewidth}
        \centering
        \includegraphics[width=\linewidth]{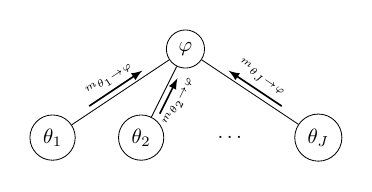}
        \caption{~}
        \label{fig:bp_varphi}
    \end{subfigure}%
    ~ 
    \begin{subfigure}[t]{0.475\linewidth}
        \centering
\includegraphics[width=\linewidth]{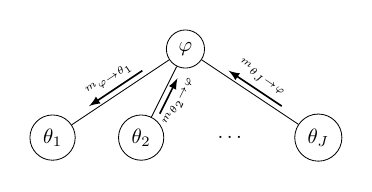}    
  \caption{~}
                \label{fig:bp_theta}
    \end{subfigure}
    \caption{Propagating beliefs by messages-passing (a) from each of the leaf nodes to the root node and (b) from each of the nodes $\theta_2,\ldots,\theta_J$ to $\varphi$, and then finally, from $\varphi$ to $\theta_1$. The updated beliefs over $\varphi$ and $\theta_1$ are $q^*(\varphi) \propto q(\varphi)\prod_{j=1}^J m_{\theta_j\rightarrow \varphi}(\varphi)$ and $\pi_{1}^*(\theta_{1}) \propto \pi_{1}(\theta_{1})\, m_{\varphi\rightarrow \theta_{1}}(\theta_{1})$, respectively.}
\end{figure}

Although not utilized in this work, the graphical model view may also be useful in devising efficient computational strategies. Especially with more complex model structures, making use of the conditional independencies made explicit by the graphical model may bring considerable computational gains.

\section{Computational strategy}\label{sec:computation}

Here we describe a simple computational strategy, which is used in the numerical examples of Section~\ref{sec:numerical_demos} in the main article. 
Some further alternatives are briefly discussed at the end of this section. 
Recall now that the density of the joint distribution of the parameters $\theta_1,\ldots,\theta_J,\varphi$ can be written as 
\begin{equation}\label{eq:joint_density}
\frac{1}{Z}q(\varphi)\prod_{j=1}^J p(\theta_j|\varphi)\pi_j(\theta_j).
\end{equation}
Our goal is to produce joint samples from the above model, enabling any desired marginals to be extracted from them. 
Probabilistic programming languages \citep[e.g.][]{Carpenter+others:2017, Salvatier+others:2015, Tran+others:2016, Wood+others:2014} allow  sampling from an arbitrary model, provided that the components of the (unnormalized) model can be specified in terms of probability distributions of some standard form. 
In the illustrations of this section, we use Hamiltonian Monte Carlo implemented in the Stan probabilistic programming language \citep{Carpenter+others:2017}.

We first note that in the above joint model (\ref{eq:joint_density}), the part specified by the meta-analyst, i.e. $q(\varphi)\prod_{j=1}^J p(\theta_j|\varphi)$, can by design be composed using standard parametric distributions.  
The observed part of the model $\prod_{j=1}^J \pi_j(\theta_j)$, however, is in general analytically intractable, and instead of having direct access to posterior density functions of standard parametric form, we typically have a sets of posterior samples $\left\{\theta_j^{(1)},\ldots,\theta_j^{(L_j)}\right\}$, with $\theta_j^{(l)}\sim \Pi_j$.
Our strategy is then to first find an intermediate parametric approximation $\hat{\pi}_j$ for $\pi_j$, which enables us to sample from an \emph{approximate} joint distribution. Assuming that the true densities $\pi_j(\theta_j)$ can be evaluated using e.g. kernel density estimation, and that $\hat{\pi}_j(\theta_j)=0 \Rightarrow \pi_j(\theta_j)=0$, the joint samples can be further \emph{refined} using sampling/importance resampling \citep[SIR;][]{Smith+Gelfand:1992}. 
The steps of the computational scheme are summarized below:
\begin{enumerate}
  \item For $j=1,\ldots,J$, fit a parametric density function $\hat{\pi}_j$ to the samples 
  $\left\{\theta_j^{(1)},\ldots,\theta_j^{(L_j)}\right\}$.
  \item Draw $M$ samples $\mathcal{S} = \left\{\theta_1^{*\,(m)},\ldots,\theta_J^{*\,(m)}, \varphi^{*\,(m)}\right\}_{m=1}^{M}$ from the approximate joint model $\frac{1}{Z'}q(\varphi)\prod_{j=1}^J p(\theta_{j}|\varphi)\hat{\pi}_j(\theta_j)$.
  \item Compute importance weights $w_m=\tilde{w}_m/\sum_{m=1}^M\tilde{w}_m$, where
\begin{align*}
\tilde{w}_m 
&= \frac{Z^{-1}\, q\left(\varphi^{*\,(m)}\right)\prod_{j=1}^J p\left(\theta_j^{*\,(m)}|\varphi^{*\,(m)}\right)\pi_j\left(\theta_j^{*\,(m)}\right)}
{(Z')^{-1}\, q\left(\varphi^{*\,(m)}\right)\prod_{j=1}^J p\left(\theta_j^{*\,(m)}|\varphi^{*\,(m)}\right)\hat{\pi}_j\left(\theta_j^{*\,(m)}\right)}
= \frac{Z'\,\prod_{j=1}^J\pi_j\left(\theta_j^{*\,(m)}\right)}{Z\,\prod_{j=1}^J\hat{\pi}_j\left(\theta_j^{*\,(m)}\right)}. 
\end{align*}
Note that the constant $Z'/Z$ cancels in the computation of the normalized weights $w_m$. 
 \item Resample $\mathcal{S}$ with weights $\{w_1,\ldots,w_M\}$. 
\end{enumerate}

For problems with a very large number of studies or high dimensional local parameters, or if the imposed parametric densities approximate the actual posteriors poorly, the computation of importance weights may become numerically unstable. 
The issue could possibly be mitigated using more advanced importance sampling schemes, such as Pareto-smoothed importance sampling \citep{Vehtari+others:2015} or prior swap importance sampling \citep{Neiswanger+Xing:2017}.
If we are only interested in sampling from the density of the global parameter, as given by Equation~\eqref{eq:posterior_uncertain_obs_density} in the main text, then an obvious alternative strategy would be to implement a Metropolis-Hastings algorithm, using the samples $\theta_j^{(l)}\sim \Pi_j$ to compute Monte Carlo estimates of the integrals $\int_{\Theta} p(\theta_j|\varphi)\pi_j(\theta_j) d\theta_j$. However, this would lead to expensive MCMC updates as the integrals need to be re-estimated at every iteration of the algorithm. 

Finally, 
instead of directly sampling from the full joint distribution, we could try to utilize the induced graphical model structure (Appendix~\ref{sec:bp} above) to do localized inference. 
Unlike our current strategy, which infers all parameters of the joint model simultaneously, this would enable us to devise more efficient and flexible computational schemes.
In particular, developments in nonparametric and particle-based belief propagation \citep{Ihler+McAllester:2009, Lienart+others:2015, Pacheco+others:2014, Sudderth+others:2010} represent a promising direction in this line of work. Alternatively, multi-stage sampling schemes, such as the one proposed by \citet{Goudie2019}, could be explored.

\section{Synthetic tuberculosis example}
\label{sec:tbc:tbc_synthetic}

We also provide a simulated version of the tuberculosis experiments in Section~\ref{subsec:tbc} of the main paper.
We set the true value of the overall mean effect $\mu = (\mu_1, \mu_2)$ to $(2.3, 0.95)$. Then, we sample the local effects $R_1$ from $\mathcal{N}(\mu_1, 0.25)$ and $R_2$ from  $\mathcal{N}(\mu_2, 0.05)$. 

Similarly to Section~\ref{subsec:tbc}, we first define the combination model as if the local effects $\theta_j=(R^{(1)}_{0,j}, R^{(2)}_{0,j})$, for $j=1\dots J$, are observed quantities:
\begin{gather*}
    \theta_j\sim \mathcal{N}_2(\mu, \Sigma_0), \quad
    \mu_1 \sim \mathrm{Gamma}(a_1, b_1),\quad
    \mu_2 \sim \mathrm{Gamma}(a_2, b_2), \quad
    \Sigma_0\sim \mathcal{W}^{-1}(\nu,\Psi). 
\end{gather*}
We set the hyper-parameters as follows:
\[
a_1 = 2.0,\, b_1 = 1.0\quad 
a_2 = 2.0, \, b_2 = 2.0,\quad  \text{ and } \quad
\nu = 4,\quad  
\Psi = \begin{bmatrix}
0.2 & 0 \\
0 & 0.01
\end{bmatrix}.
\]
Then, we use the MBA update rules to incorporate the study-specific posteriors, computed independently using ABC.
Figure~\ref{fig:local_tb_synth} shows the updated beliefs for the local effects $\theta_1, \ldots, \theta_J$, after borrowing strength across the individual studies. Similarly to what was observed with real world-data (Section~\ref{subsec:tbc} in the main paper), the updated posteriors exhibit a more identifiable behavior.  
Figure~\ref{fig:global_tb_synth} shows that the superposterior for the overall mean effect concentrates close to the true value $\mu$.

\begin{figure}[H]
\centering
\includegraphics[width=0.72\linewidth]{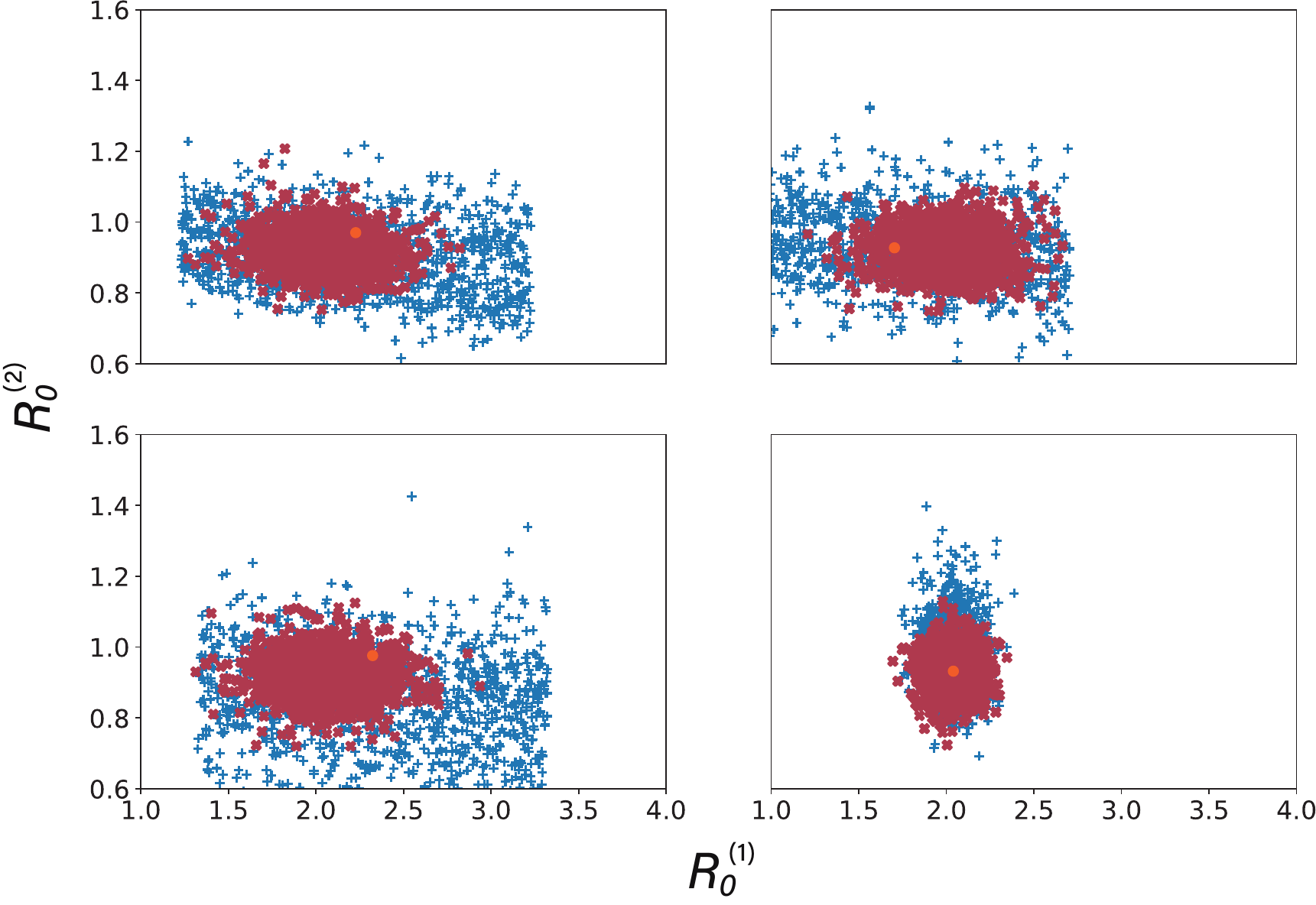}\\
\caption{Synthetic tuberculosis: posteriors for the reproductive numbers $(R^{(1)}_{0,j}, R^{(2)}_{0,j})$ updated using MBA (red), plotted on top of the original, individually obtained posteriors (blue). The orange dots denote the true value of local effects.}\label{fig:local_tb_synth}
\end{figure}

\begin{figure}[H]
\centering
\includegraphics[width=0.45\linewidth]{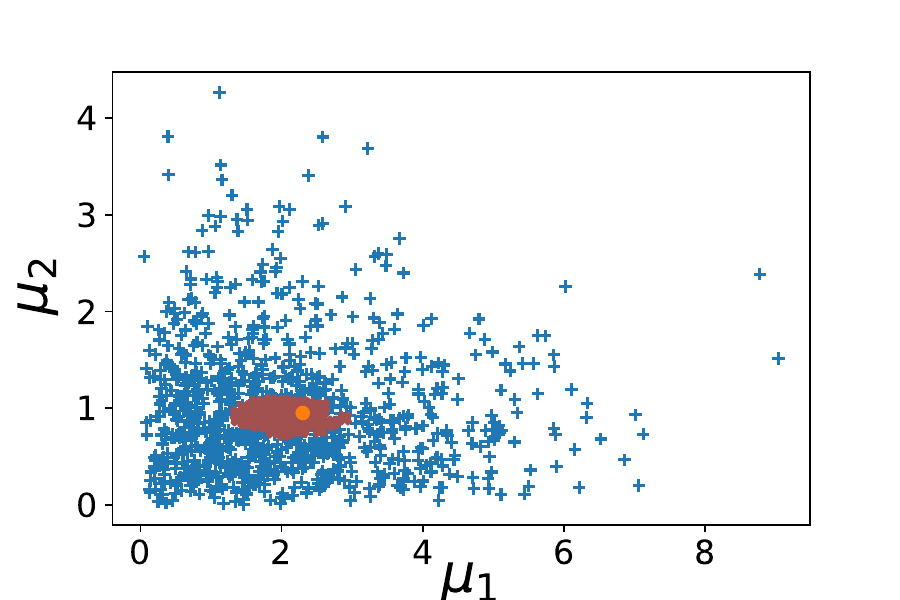}\\
\caption{Synthetic tuberculosis: superposterior (red) for the overall effect $\mu=(\mu_1, \mu_2)$, plotted on top of its prior (blue). The orange dot denotes the true value of overall effect.}\label{fig:global_tb_synth}
\end{figure}

\paragraph{Details on the ABC simulations.} To simulate the same scenario as in the real-world experiment, we use an informative prior to compute the ABC posterior for one of the local studies (bottom right in Figure~\ref{fig:local_tb_synth}). More specifically, we sample $R^{(1)}_0$ from a Normal with standard deviation $0.1$ and sample $R^{(2)}_0$ from an uniform with variance $1/12$, each centered around the true value of $R^{(1)}_0$ and $R^{(2)}_0$, respectively. For the other local ABC studies, we use a uniform prior with support on a $2\times 1$ rectangle centered on $\theta_j$. For all simulations, we set the burden rate to $10$ and the net transmission rate of the non-compliant population to $11$. The remaining implementation details are identical to the ones for the real-world case.

\end{document}